\newtheorem{theorem}{Theorem}
\newtheorem{lemma}{Lemma}
\newtheorem{claim}{Claim}
\theoremstyle{definition}		
\newtheorem{definition}{Definition}
\newtheorem{corollary}{Corollary}
\newcommand{\set}[1]{\left\{ #1 \right\}}
\newcommand{\rpar}[1]{\left( #1 \right)}
\newcommand{\abs}[1]{\left| #1 \right|}
\renewcommand{\natural}{\mathbb{N}}
\DeclareMathOperator{\rank}{rank}
\DeclareMathOperator{\poly}{poly}
\DeclareMathOperator{\election}{\mathcal{E}}
\DeclareMathOperator{\tie}{\succ_{\mathsf{tie}}}
\DeclareMathOperator{\tieprime}{{\succ}^\prime_{\mathsf{tie}}}
\DeclareMathOperator{\committees}{\mathbb{C}}
\DeclareMathOperator{\ps}{\pi}
\def\eqdef{\mathrel{{:}{=}}}
\def\e#1{\emph{#1}}
\def\T{\mathbf{T}}
\def\P{\mathbf{P}}
\def\Q{\mathbf{Q}}
\def\M{\mathbf{M}}
\def\neccom{\mathsf{NecCom}}
\def\poscom{\mathsf{PosCom}}
\def\necmem{\mathsf{NecMem}}
\def\posmem{\mathsf{PosMem}}
\def\angs#1{\mathord{\langle{#1\rangle}}}
\def\neccompk{\neccom\angs{k}}
\def\poscompk{\poscom\angs{k}}
\def\necmempk{\necmem\angs{k}}
\def\posmempk{\posmem\angs{k}}
\def\NW{\text{$\mathsf{NW}$}\xspace}
\def\PW{\text{$\mathsf{PW}$}\xspace}
\def\NTW{\text{$\mathsf{NTW}$}\xspace}
\def\PTW{\text{$\mathsf{PTW}$}\xspace}
\def\NTWk{\text{$\mathsf{NTW}\angs{k}$}\xspace}
\def\PTWk{\text{$\mathsf{PTW}\angs{k}$}\xspace}
\def\NWS{\text{$\mathsf{NWS}$}\xspace}
\def\PWS{\text{$\mathsf{PWS}$}\xspace}
\def\mypar#1{\par\vskip0.5em\noindent\underline{#1}\,}
\def\deg{\mathrm{deg}}
\newcommand{\eat}[1]{}
\newenvironment{hintproof}{\paragraph{Proof (sketch):}}{\hfill$\square$}
\newif\ifLong
\title{The Complexity of Determining the Necessary and Possible\\ Top-k
  Winners in Partial Voting Profiles}
\author{
Aviram Imber
\and
Benny Kimelfeld
\affiliations
Technion -- Israel Institute of Technology
\emails
\{aviram.imber, bennyk\}@cs.technion.ac.il
}
\begin{document}

\maketitle

\begin{abstract}
 When voter preferences are known in an incomplete (partial) manner, winner determination is commonly treated as the identification of the necessary and possible winners; these are the candidates who win in all completions or at least one completion, respectively, of the partial voting profile. In the case of a positional scoring rule, the winners are the candidates who receive the maximal total score from the voters.  Yet, the outcome of an election might go beyond the absolute winners to the top-$k$ winners, as in the case of committee selection, primaries of political parties, and ranking in recruiting. We investigate the computational complexity of determining the necessary and possible top-$k$ winners over partial voting profiles. Our results apply to general classes of positional
  scoring rules and focus on the cases where $k$ is given as part of
  the input and where $k$ is fixed.
\end{abstract}

\section{Introduction}

A central task in social choice is that of \e{winner determination}---how to aggregate the candidate preferences of voters to select the winner. Relevant scenarios may be political elections, document rankings in search engines, hiring dynamics in the job market, decision making in multiagent systems, determination of outcomes in sports tournaments, and so on~\cite{DBLP:reference/choice/2016}. Different voting rules can be adopted for this task. The computational social-choice community has studied in depth the family of the \e{positional scoring rules}, where each voter assigns to each candidate a score based on the candidate’s position in the voter's ranking, and then a winner is a candidate who receives the maximal sum of scores. Famous instantiations of the positional scoring rules include the plurality rule (where a winner is most frequently ranked first), the veto rule (where a winner is least frequently ranked last), their generalizations to $t$-approval and $t$-veto, respectively, and the Borda rule (where the score is the actual position in the reverse order).

The seminal work of Konczak and Lang~\shortcite{Konczak2005VotingPW} has addressed the situation where voter preferences are expressed or known in just a partial manner. The framework is based on the notions of the \e{necessary winners} and \e{possible winners}, who are the candidates that win in every completion, or at least one completion, respectively, of the given partial preferences into complete ones. More precisely, a voting profile consists of a partial order for each voter, and a completion consists of a linear extension for each of the partial orders. Determining the necessary and possible winners is computationally challenging since, conceptually, it involves reasoning about the entire (exponential-size) space of such completions. The complexity of these problems has been thoroughly studied in a series of publications that established a full classification of a general class of positional scoring rules (the ``pure'' scoring rules) into tractable and intractable ones~\cite{DBLP:journals/jcss/BetzlerD10,DBLP:journals/jair/XiaC11,DBLP:journals/ipl/BaumeisterR12}. 

The outcome of an election often goes beyond the single winner to the set of top-$k$ winners.  For example, the top-$k$ winners might be the elected parliament members, the entries of the first page of the search engine, the job candidates to recruit, and the finalists of a sports competition.  In the case of a positional scoring rule, the top-$k$ winners are the candidates who receive the top scores (under some tie-breaking mechanism)~\cite{DBLP:journals/jair/MeirPRZ08}.  Adopting the framework of Konczak and Lang~\shortcite{Konczak2005VotingPW}, in this paper we investigate the computational complexity of determining the necessary and possible top-$k$ winners for incomplete voting profiles and positional scoring rules.

We show that the top-$k$ variant makes the problems fundamentally harder than their top-$1$ counterparts
(necessary and possible winners) when $k$ is given as input.  For example, it is known
that detecting the possible winners is NP-hard for every pure rule, with the exception of plurality and veto where the problem is solvable in polynomial time~\cite{DBLP:journals/jcss/BetzlerD10,DBLP:journals/jair/XiaC11,DBLP:journals/ipl/BaumeisterR12};
we show that in the case of top-$k$, the problem is NP-hard for \e{every} pure rule, \e{including plurality and veto}.  Moreover, tractability of the necessary winners \e{does not} extend to the necessary top-$k$ winners: we show that the detecting whether a candidate is necessarily a top-$k$ winner is coNP-complete for a quite general class of positional scoring rules that include all of the aforementioned ones. We also study the impact of fixing $k$ and establish a  more positive picture: detecting the necessary top-$k$ winners is tractable (assuming that the scores are polynomial in the number of candidates) and detecting the possible the top-$k$ winners is tractable for plurality and veto.

The concept of the top-$k$ winners can be viewed as a special case of \e{multiwinner election} that has been studied mostly in the context of \e{committee selection}. Various utilities have been studied for qualifying selected committee, such as maximizing the number of voters with approved candidates~\cite{DBLP:conf/atal/AzizGGMMW15} and, in that spirit, the Condorcet committees~\cite{DBLP:conf/ijcai/ElkindLS11,DBLP:journals/mss/Darmann13}, aiming at proportional representation via frameworks such as Chamberlin and Courant's~\shortcite{10.2307/1957270} and Monroe's~\shortcite{monroe_1995}, and the satisfaction of fairness and diversity constraints~\cite{DBLP:conf/ijcai/CelisHV18,DBLP:conf/aaai/BredereckFILS18}.

In the case of incomplete voter preferences, the generalization of the problem we study is that of detecting the \e{necessary and possible committee members}. These are interesting and challenging problems in all the variants of committee selection, and we leave them for future investigation.  Note, however, that the problem of determining the elected committee can be intractable even if the preferences are complete~\cite{DBLP:conf/ijcai/ProcacciaRZ07,DBLP:journals/scw/ProcacciaRZ08,DBLP:journals/mss/Darmann13,DBLP:journals/tcs/SkowronYFE15}, in contrast to the top-$k$ winners.
Yet, we show that our results imply the tractability of determining whether a candidate set is a
necessary or possible Condorcet committee in the case of the plurality and veto rules.
The problem of multiwinner determination for incomplete votes has been studied by Lu and Boutilier~\shortcite{DBLP:conf/ijcai/LuB13} in a perspective different from the necessary and possible top-$k$ winners: find a committee that minimizes the maximum objection (or ``regret'') over all possible completions.

\ifLong
\else
Due to lack of space, some of the proofs are excluded from the paper and
will be presented in the long version.
\fi

\eat{

\cite{DBLP:conf/ijcai/LuB13} - studied the complexity of multiwinner elections that elect the winners based on incomplete preferences.

------------------

\cite{DBLP:conf/ijcai/CelisHV18} - introduced a framework for multiwinner voting with fairness constraints on the elected winners, and studied the complexity for various score functions.

\cite{DBLP:conf/atal/AzizGGMMW15} - studied the complexity of winner determination and computing best response in multiwinner elections with approval ballots.

The problems necessary and possible winners of elections were
introduced by \cite{Konczak2005VotingPW}. The complexity of these
problems was classified for many voting rules in

\cite{DBLP:journals/tcs/SkowronYFE15} - studied the complexity of winner determination under Chamberlin-Courant's rule and Monroe's rule.

\cite{DBLP:conf/ijcai/ProcacciaRZ07} - studied the complexity of manipulation, control and winner determination for the multiwinner voting schemes SNTV, Bloc voting, Approval, and Cumulative voting.

\cite{DBLP:journals/scw/ProcacciaRZ08} - studied the complexity of selecting winners in proportional representation voting schemes.

\cite{DBLP:journals/mss/Darmann13} - studied the complexity of determining whether a set is a Condorcet
committee, and whether a winning set exists.

\cite{DBLP:conf/aaai/BredereckFILS18} - studied the complexity of computing winning committees where the goal is to maximize the score and satisfy requirements of diversity.

\cite{DBLP:conf/ijcai/ElkindLS11} - studied the complexity of finding Condorcet winning sets (different definition of Condorcet).

}
\section{Preliminaries}
We begin with some notation and terminology.

\paragraph{Voting Profiles and Positional Scoring Rules.}
Let $C = \set{c_1, \dots, c_m}$ be the set of \emph{candidates} (or
\emph{alternatives}) and let $V = \set{v_1, \dots, v_n}$ be the set of
\emph{voters}. A \emph{voting profile} $\T = (T_1, \dots, T_n)$ consists
of $n$ linear orders on $C$, where each $T_i$ represents the ranking
of $C$ by $v_i$.

A \e{positional scoring rule} $r$ is a series
$\set{ \Vec{s}_m }_{m \in \natural^+}$ of $m$-dimensional vectors
$\Vec{s}_m = (\Vec{s}_m(1), \dots, \Vec{s}_m(m))$ where
$\Vec{s}_m(1) \geq \dots \geq \Vec{s}_m(m)$ and
$\Vec{s}_m(1) > \Vec{s}_m(m)$. We denote $\Vec{s}_m(j)$ by $r(m,j)$.
Some examples of positional scoring rules include the \emph{plurality}
rule $(1, 0, \dots, 0)$, the \emph{$t$-approval} rule
$(1, \dots, 1, 0, \dots, 0)$ that begins with $t$ ones, the \emph{veto}
rule $(1, \dots, 1, 0)$, the \emph{$t$-veto} rule that ends with $t$
zeros, and the \emph{Borda} rule $(m-1, m-2, \dots, 0)$.

Given a voting profile $\T = (T_1, \dots, T_n)$, the score $s(T_i, c, r)$ that
the voter $v_i$ contributes to the candidate $c$ is $r(m,j)$ where $j$
is the position of $c$ in $T_i$. The score of $c$ in $\T$ is
$s(\T, c, r) = \sum_{i=1}^n s(T_i, c, r)$ or simply $s(\T, c)$ if $r$ is clear from context. The \e{winners} (or \e{co-winners})
are the candidates $c$ with a maximal $s(\T, c)$.

We make standard assumptions about the positional scoring rule
$r$. We assume that $r(m,i)$ is
computable in polynomial time in $m$. We also assume that the numbers
in each $\Vec{s}_m$ are co-prime (i.e., their greatest common divisor
is one).

A positional scoring rule is \emph{pure} if
$\Vec{s}_{m+1}$ is obtained from $\Vec{s}_m$ by inserting a
score at some position,  for all $m > 1$.

\paragraph{Partial Profiles.}

A \e{partial voting profile} $\P = (P_1, \dots, P_n)$ consists of $n$
partial orders on set $C$ of candidates, where each $P_i$ represents
the incomplete preference of the voter $v_i$. A \e{completion} of
$\P = (P_1, \dots, P_n)$ is a complete voting profile
$\T = (T_1, \dots, T_n)$ where each $T_i$ is a completion (i.e., linear
extension) of the partial order $P_i$.

The problems of \e{necessary winners} and \e{possible winners} were
introduced by Konczak and Lang~\shortcite{Konczak2005VotingPW}. Given
a partial voting profile $\P$, a candidate $c \in C$ is a
\emph{necessary winner} if $c$ is a winner in every completion $\T$ of
$\P$, and $c$ is a \emph{possible winner} if there exists a completion
$\T$ of $\P$ where $c$ is a winner. The decision problems associated
to a positional scoring rule $r$ are those of determining, given a
partial profile $\P$ and a candidate $c$, whether $c$ is a necessary
winner and whether $c$ is a possible winner. We denote these problems
by \NW and \PW, respectively.  A classification of the complexity of
these problems has been established in a sequence of publications.
\begin{theorem}[Classification Theorem~\cite{DBLP:journals/jcss/BetzlerD10,DBLP:journals/jair/XiaC11,DBLP:journals/ipl/BaumeisterR12}]
%\begin{itemize}
%\item
  \NW can be solved in polynomial time for every positional scoring
  rule.
  % \item
  \PW is solvable in polynomial time for plurality and veto; for all
  other pure scoring rules, it is NP-complete.
%\end{itemize}
\label{thm:classification}
\end{theorem}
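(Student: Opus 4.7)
The plan for \NW is to observe that voters contribute independently, and that $c$ fails to be a necessary winner iff there is some rival $c' \ne c$ and a completion $\T$ with $s(\T,c') > s(\T,c)$ (using strict inequality for concreteness; ties are handled analogously). Fix such a $c'$. Since voters are independent, it suffices to compute, for each voter $v_i$ separately, the quantity $\delta_i \eqdef \max_{T_i} \bigl(s(T_i,c',r) - s(T_i,c,r)\bigr)$ over linear extensions $T_i$ of $P_i$, and then test whether $\sum_i \delta_i > 0$. Repeating this for all $m-1$ rivals gives the algorithm, so the whole task reduces to per-voter maximization.

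The per-voter step is the technical core. Since the score contribution only depends on the positions of $c$ and $c'$ in $T_i$, one enumerates all $O(m^2)$ ordered pairs $(j',j)$ of positions and checks, for each pair, whether $P_i$ admits a linear extension placing $c'$ at position $j'$ and $c$ at position $j$; among the feasible pairs one picks the one maximizing $r(m,j')-r(m,j)$. Feasibility can be decided greedily: the pair is realizable iff (i) the $P_i$-order between $c$ and $c'$ is consistent with $j'$ vs.\ $j$, (ii) the counts of candidates forced above/below $c$ and $c'$ by $P_i$ fit the prescribed positions, and (iii) the remaining candidates admit a simultaneous linear extension. One formalizes (iii) by a Hall-type condition or by a direct constructive argument that repeatedly extracts minimal elements from the subposets induced above $c'$, between $c'$ and $c$, and below $c$. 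Summing the per-voter optima and comparing with $0$ closes the \NW half.

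For \PW under plurality, only the top-ranked candidate in each completion matters, so the problem becomes: can we pick, for each voter $v_i$, a \emph{top candidate} $t_i$ that is a $P_i$-maximal element, such that $c$ receives at least as many picks as every other candidate (after favoring $c$ in $P_i$-maxima when possible)? This is a scheduling/bipartite assignment problem solvable by a direct greedy argument (assign $c$ as the top whenever possible, then balance the remaining voters across the other candidates using a max-flow/matching formulation). The veto case is symmetric, taking $P_i$-minimal elements.

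The remaining claim is NP-hardness of \PW for every pure rule other than plurality and veto, together with NP membership (which is immediate: guess a completion and evaluate). The plan is to present one master reduction from a problem such as \textsc{3-Dimensional Matching} or \textsc{Exact Cover}, and then amplify it across the infinite family of pure rules by exploiting purity itself: purity guarantees that $\Vec{s}_{m+1}$ is obtained from $\Vec{s}_m$ by a single insertion, which lets one ``pad'' candidates and voters to normalize the shape of $\Vec{s}_m$ at the instance size needed by the reduction. The main obstacle is precisely this uniformity: a pure rule that is neither plurality nor veto must, for sufficiently large $m$, contain three consecutive score values $a > b > c$ that are genuinely non-trivial, and one must design voter gadgets whose swap effects key off the gap pattern $a-b$ versus $b-c$. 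Establishing that such a gadget exists \emph{uniformly} across all pure rules, handling separately the cases where the non-trivial triple sits near the top, near the bottom, or in the middle of $\Vec{s}_m$, is the delicate part and is where the three cited papers ultimately close the classification.
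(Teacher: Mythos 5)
This theorem is not proved in the paper at all: it is the Classification Theorem imported as a black box from the three cited works (Betzler--Dorn, Xia--Conitzer, Baumeister--Rothe), so there is no in-paper argument to compare yours against. Judged on its own, your sketch of the two tractable halves follows the same route as those references. The per-rival, per-voter decomposition of \NW, with an $O(m^2)$ enumeration of position pairs for $c$ and $c'$ in each vote, is exactly the Xia--Conitzer algorithm; your step (iii) --- deciding whether the remaining candidates admit a simultaneous linear extension once the two positions are fixed --- is the only delicate point, and it does go through by the kind of greedy extraction you indicate, though as written it is an assertion rather than an argument. Your flow/matching formulation of \PW for plurality (maximize $c$'s score first, then balance the remaining top positions) and its mirror image for veto is likewise the standard one; the paper itself reuses the same device via polygamous matching in Lemma~\ref{lemma:committeeMatching}.

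The genuine gap is the NP-hardness half. You describe a plan --- locate a non-trivial triple of consecutive score values, build swap gadgets keyed to the two gaps, and use purity to pad the scoring vector to the required length --- but you then explicitly concede that carrying this out \emph{uniformly} over all pure rules other than plurality and veto is ``where the three cited papers ultimately close the classification.'' That is a description of where the proof lives, not a proof: the case analysis you gesture at (the triple sitting near the top, near the bottom, or in the middle of the vector, plus the stubborn borderline rules such as $(2,1,\dots,1,0)$) is precisely the content that required three successive papers to complete. Since the present paper only cites this theorem and never proves it, deferring to the references is acceptable here, but you should state clearly that the hardness half of your write-up is a citation, not an argument, and not present the gadget construction as if it were within reach of the sketch.
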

In this paper, we aim towards generalizing the Classification Theorem
to determine the necessary and possible \e{top-$k$ winners}, as we
formalize next.

\paragraph{Top-k Winners.}
In principle, a \e{top-$k$ winner} is a candidate that is ranked at
one of the top $k$ places with respect to the sum of scores from the
voters. However, for a precise definition, we need to reason about
\e{ties}. One could adopt several options for being on the top-$k$
winners:
%\begin{enumerate}
% \item
\e{(a)} w.r.t.~\e{at least one} tie-breaking order;
% \item
\e{(b)} w.r.t.~\e{every} tie-breaking order;
% \item
\e{and (c)} w.r.t.~a tie-breaking order \e{given as input}.
%\end{enumerate}
For simplicity of presentation, we adopt the third variation and
assume that the tie-breaking order is given as 
input. Nevertheless, \e{all of our results hold for all three
  variations}. 

Formally, let $r$ be a positional scoring rule, $C$ be a set of
candidates, $\T$ a voting profile, and $\tie$ a \e{tie breaker}, which
is simply a linear order over $C$.  Let $R_{\T}$ be the linear order
on $C$ that sorts the candidates lexicographically by their scores and
then by $\tie$; that is,
\begin{align*}
  R_{\T} \eqdef & \set{c_1 > c_2 : s(\T, c_1) > s(\T, c_2) } \cup \\
                & \set{c_1 > c_2 : s(\T, c_1) = s(\T, c_2)\land c_1 \tie c_2 }\,. 
\end{align*}
A candidate $c$ is a \e{top-$k$ winner} if the position of $c$ in
$R_{\T}$, denoted by $\rank(\T, c)$, is at most $k$. Note that a top-1 winner is necessarily a
winner, but a winner might not be a top-1 winner due to tie breaking.

If $\T$ is replaced with a partial voting profile $\P$, then a candidate $c$ is a \e{necessary top-$k$ winner} if $c$ is a
top-$k$ winner in every completion $\T$ of $\P$, and a
\e{possible top-$k$ winner} if $c$ is a top-$k$ winner in at least one
completion $\T$ of $\P$. Hence, for a positional scoring rule $r$, we
have two computational problems where the input consists
of a candidate set $C$, a partial profile $\P$, a tie breaker $\tie$, a candidate $c$
and a number $k$:
\begin{itemize}
\item In \NTW, the goal is to determine whether $c$ is a necessary
  top-$k$ winner.
\item In \PTW, the goal is to determine whether $c$ is a possible
  top-$k$ winner.
\end{itemize}

We will also consider the versions where $k$ is fixed, and then denote
it by parameterizing the problem with $k$: \NTWk and \PTWk

\paragraph{Additional Notation.}
We use the following notation. For a set $A$ and
a partition $A_1, \dots, A_t$ of $A$:
\begin{itemize}
\item $P(A_1, \dots, A_t)$ denotes the partitioned partial order
$ P(A_1, \dots, A_t) \eqdef \set{a_1 \succ \dots \succ a_t : \forall i
  \in [t], a_i \in A_i}$.
\item $O(A_1, \dots, A_t)$ denotes an arbitrary linear order on $A$ that
  completes $P(A_1, \dots, A_t)$.
\end{itemize}
  A linear order
  $a_1 \succ \dots \succ a_t$ is also denoted as a vector
  $(a_1, \dots, a_t)$. The \e{concatenation} 
  $(a_1, \dots, a_t) \circ (b_1, \dots, b_\ell)$ is $(a_1, \dots, a_t, b_1,\dots, b_\ell)$.
  %Similarly, a concatenation of two voting profiles is
  %$(P_1, \dots, P_t) \circ (P'_1, \dots, P'_\ell) = (P_1, \dots, P_t, P'_1, \dots, P'_\ell)$.

\eat{
We say that a rule is \emph{strongly pure} if
$\Vec{s}_{m+1}$ is obtained from $\Vec{s}_m$ by inserting a score at
one of the ends of the vector, that is, either
$\Vec{s}_{m+1} = (\Vec{s}_{m+1}(1)) \circ \Vec{s}_m$ or
$\Vec{s}_{m+1} = \Vec{s}_m \circ (\Vec{s}_{m+1}(m+1))$. We say that a
rule is \emph{polynomial} if the scores in $\Vec{s}_m$ are
$\poly(m)$. Note that plurality, $t$-approval, veto, $t$-veto and
Borda are polynomial and strongly pure.
}

\section{Hardness of Top-k Winners}

We first show that the problems we study are computationally
hard for quite general classes of positional scoring rules.
%Unlike the
%case of winner determination, the problem is hard in both the
%``necessary'' and ``possible'' variations.

\subsection{Plurality and Veto}
The following theorems
state the hardness for the plurality and veto rules where both \NW and
\PW are solvable in polynomial time (according to the Classification
Theorem).

\begin{theorem}
  \label{thm:PluralityPosNecMem}
  For the plurality rule, \NTW is coNP-complete and \PTW is
  NP-complete.
\end{theorem}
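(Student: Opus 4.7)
My plan is to prove the two halves separately. Membership is routine: for \PTW, nondeterministically guessing a completion $\T$ of $\P$ and verifying $\rank(\T,c)\le k$ places the problem in NP; for \NTW, the same kind of certificate witnesses the complement, so \NTW lies in coNP. The core work is the two hardness reductions.

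For the NP-hardness of \PTW I would reduce from \textsc{Hitting Set}. Given sets $S_1,\dots,S_n$ over a universe $U$ and integer $t$, construct a plurality election with candidate set $C=U\cup\{c\}$, a tie-breaker $\tie$ that ranks $c$ above every element of $U$, and one voter per set $S_i$ whose partial order makes the elements of $S_i$ its (pairwise incomparable) maximal elements and places every other candidate strictly below. In any completion, the voter for $S_i$ must rank some element of $S_i$ first, so the set of ``possible first-place picks'' for that voter is exactly $S_i$. Because $c$ lies in no choice set, $s(\T,c)=0$ in every completion, and since $c$ wins all ties we have $\rank(\T,c)=1+|\{u\in U: s(\T,u)\ge 1\}|$. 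Setting $k\eqdef t+1$, the candidate $c$ is a possible top-$k$ winner iff some choice function $f(i)\in S_i$ uses at most $t$ distinct elements of $U$, i.e., iff a hitting set of size at most $t$ exists.

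For the coNP-hardness of \NTW I reduce \textsc{Independent Set} to the complement of \NTW. Given a graph $G=(V,E)$ and integer $k$, set $\Delta=\max_{v\in V}\deg(v)$ and take $C=V\cup\{c\}$ with a tie-breaker placing $c$ below every vertex. Introduce three groups of voters: $\Delta$ voters each ranking $c$ first; for each edge $(u,v)\in E$, one ``edge voter'' whose partial order makes $u$ and $v$ the two incomparable maximal elements; and for each $v\in V$, $\Delta-\deg(v)$ ``padding voters'' ranking $v$ first. In any completion, $s(\T,c)=\Delta$ and $s(\T,v)=(\Delta-\deg(v))+x_v$, where $x_v$ is the number of incident edge voters that put $v$ first. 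Hence $s(\T,v)\le\Delta$, with equality iff every edge incident to $v$ is ``directed to $v$''. Two adjacent vertices cannot both attain $\Delta$ because their common edge can only contribute to one of them, so the set of vertices reaching $\Delta$ in a fixed completion is always an independent set; conversely, any independent set $I$ is realised by directing each edge with an endpoint in $I$ to that endpoint (and other edges arbitrarily). Since the tie-breaker disfavours $c$, a vertex outranks $c$ iff it attains score $\Delta$. With the \NTW threshold equal to the input $k$, the candidate $c$ fails to be a necessary top-$k$ winner iff some completion produces at least $k$ vertices with score $\Delta$, iff $G$ has an independent set of size at least $k$.

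The main obstacle is the \NTW direction: the obvious attempt of asking the adversary to ``spread votes'' across many distinct candidates is too easy, since maximising the number of candidates receiving at least one plurality vote is solvable in polynomial time by bipartite matching. The key device is the group of $\Delta-\deg(v)$ padding voters, which uniformises the maximum attainable score across vertices of different degrees so that $c$'s fixed score $\Delta$ becomes the common threshold. Under this uniformisation the combinatorial content of \NTW collapses to the single-edge conflict between adjacent vertices competing for a shared edge, which is precisely the defining obstruction in \textsc{Independent Set}.
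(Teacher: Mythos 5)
Your proof is correct, and it splits into one half that mirrors the paper and one that does not. The \PTW half is essentially the paper's own reduction: the paper reduces from Dominating Set by letting voter $i$ choose freely among the closed neighbourhood $N(u_i)^*$, which is exactly your Hitting Set reduction instantiated with $S_i = N(u_i)^*$; in both constructions $c$ scores $0$, wins every tie, and is a top-$k$ winner precisely when the first-place votes can be concentrated on few candidates (and, as the paper later notes, this reduction also gives $\mathrm{W}[2]$-hardness in the parameter $k$). The \NTW half takes a genuinely different route. The paper reduces from Exact Cover by 3-Sets: the candidates are the 3-sets, three extra voters pin $c^*$'s score at $3$, each 3-set must collect all three of its element-voters to reach $3$, and $q$ of them can do so simultaneously exactly when they form an exact cover. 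You instead reduce from Independent Set, using the $\Delta-\deg(v)$ padding voters to uniformise the attainable maximum across vertices of different degrees so that $c$'s fixed score $\Delta$ is the common threshold. Both constructions exploit the same mechanism --- a candidate overtakes $c$ only by collecting every vote available to it, and two conflicting candidates cannot both do so --- but your source problem discards the covering/exactness aspect of X3C entirely: you only need $k$ pairwise non-conflicting candidates rather than a disjoint family that partitions the voter set, which makes the backward direction of the correctness argument a one-line observation at the cost of the degree-equalisation gadget. Your membership arguments are the standard ones and match the paper's (which it declares straightforward and omits).
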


\ifLong
\begin{proof}
  Memberships in the corresponding classes (coNP and NP) are
  straightforward, so we prove only hardness.
  We show a reduction for each of the two
  problems.

  \mypar{\NTW:} We show a reduction from
  \e{exact cover by-3sets} (X3C), which is the following decision
  problem:
  Given a vertex set $U = \set{u_1, \dots, u_{3q}}$ and a collection $E$
  of
  3-element subsets of $U$, can we cover all the
  elements of $U$ using $q$ pairwise-disjoint sets from $E$? For
  $u \in U$, denote by
  $E(u)$ the set $\set{e \in E : u \in e}$ of edges
  incident to $u$.

  Given $U$ and $E$, we construct an
  instance
  $(C, \P, \tie)$ of \NTW under the plurality rule where
  $C = E \cup \set{c^*}$, where $\tie = O(E, \set{c^*})$, and where
  $\P$ is the partial voting profile
  $(P_1, \dots, P_{3q}, T_{3q+1}, T_{3q+2}, T_{3q+3})$.
  For every $i \in [3q]$,
$$ P_i = P (E(u), E \setminus E(u), \set{c^*})\,.$$
This means that the $i$th voter can vote only for edges that cover
$u_i$.
%(assume that every element of $U$ can be covered by at least one
%edge, otherwise the problem is trivial).
For $ i > 3q$ the order is $T_i = O(\set{c^*}, E)$.
%(Note that the
%orders $T_i$ are complete.)
To complete, we show that there is an exact cover if and only if $c^*$
is not a necessary top-$q$ winner.

Suppose that $c^*$ is not a necessary top-$q$ winner, that is, there
are $q$ candidates $e_{i_1}, \dots, e_{i_q}$ and a completion $\T$ of
$\P$ such that $s(\T, e_{i_j}) \geq s(\T,c^*) = 3$ for all
$j \in [q]$. Since every edge can get at most three votes, we get that
$s(\T, e_{i_j}) = 3$ for every $j \in [q]$. Therefore
$e_{i_1}, \dots, e_{i_q}$ is an exact cover: every vertex $u_i$ is
covered by the edge that $P_i$ voted for, and the edges are pairwise
disjoint (since, if two edges are overlapping, then one gets at most
two votes).

Conversely, given an X3C solution $Q = \set{e_{i_1}, \dots, e_{i_q}}$
define a profile $\T = (T_1, \dots, T_{3q+3})$ such that for every
$i \in [3q]$ we have
$$T_i = O(Q \cap E(u_i), E(u_i) \setminus Q, E \setminus E(u_i), \set{c^*})\,.$$
Every $T_i$ extends $P_i$, for all $e \in Q$ we have $s(\T, e) = 3$,
and $s(\T, c^*) = 3$. Hence, all edges in $Q$ defeat $c^*$, and $c^*$
is a not a top-$q$ winner in $\T$.

\mypar{\PTW:} We use a reduction from the \e{dominating set} problem,
which is the following: Given an undirected graph $G = (U,E)$ and an
integer $k$, is there a set $D \subseteq U$ of size $k$ such that
every vertex is either in $D$ or adjacent to some vertex in $D$? Given
a graph $(U,E)$ with $U = \set{u_1, \dots, u_n}$, we construct an
instance $(C, \P, \tie)$ for \PTW under plurality where
$C = U \cup \set{c^*}$, where $\tie = O(\set{c^*}, U)$, and where
$P = (P_1, \dots, P_n)$. Let $N(u_i)$ be the set of neighbours of
$u_i$, and let $N(u_i)^* = N(u_i) \cup \set{u_i}$. For all
$i \in [n]$ we have
$$ P_i \eqdef P(N(u_i)^*, U \setminus N(u_i)^*, \set{c^*})\,. $$
Hence, the $i$th voter can vote only for vertices that dominate $u_i$.
To complete, we show that the graph has a dominating set of size $k$
if and only if $c^*$ is a possible top-$k$ winner.

Suppose there is a dominating set $D$ of size $k$, consider the profile $\T = (T_1, \dots, T_n)$ where for every $i \in [n]$,
$$ T_i \eqdef O(N(u_i)^* \cap D, N(u_i)^* \setminus D, U \setminus N(u_i)^*, \set{c^*})\,.$$
In this completion, for each $u \notin D$ we get $s(\T, u) = 0$. These
are $n-k$ candidates that $c^*$ defeats, therefore $c^*$ is a possible top-$k$ winner. Conversely, if $c^*$ is a possible top-$k$ winner then in some completion $\T$ it defeats at least $n-k$
candidates, and these candidates have a score 0 in $\T$. Let $D$ be the
set of candidates that $c^*$ does not defeat in $\T$, all voters voted
for candidates in $D$ and $|D| \leq k$. A voter $P_i$ can only vote
for vertices which dominate $u_i$, hence $D$ is a dominating set of
size at most $k$.
% The problem of dominating set is also W[2] hard with the parameter
% $k$, so this reduction implies that $\posmem$ is W[2]-hard with the
% parameter $k$ under the plurality rule. 
\end{proof}
\else
\begin{hintproof}
  We show a reduction for each of the two.

  \mypar{\NTW:} We show a reduction from
  \e{exact cover by 3-sets} (X3C), which is the following decision
  problem:
  Given a vertex set $U = \set{u_1, \dots, u_{3q}}$ and a collection $E$
  of
  3-element subsets of $U$, can we cover all the
  elements of $U$ using $q$ pairwise-disjoint sets from $E$? For
  $u \in U$, denote by
  $E(u)$ the set $\set{e \in E : u \in e}$ of edges
  incident to $u$.

  Given $U$ and $E$, we construct an
  instance
  $(C, \P, \tie)$ of \NTW under the plurality rule where
  $C = E \cup \set{c^*}$, where $\tie = O(E, \set{c^*})$, and where
  $\P$ is the partial voting profile
  $(P_1, \dots, P_{3q}, T_{3q+1}, T_{3q+2}, T_{3q+3})$.
For every $i \in [3q]$ we define
$ P_i = P (E(u), E \setminus E(u), \set{c^*})$.
For $ i > 3q$, the order is $T_i = O(\set{c^*}, E)$. We can show that there is an exact cover if and only if $c^*$
is not a necessary top-$q$ winner.

\mypar{\PTW:} We reduce from the \e{dominating set} problem: Given an undirected graph $G = (U,E)$ and an
integer $k$, is there a set $D \subseteq U$ of size $k$ such that
every vertex is either in $D$ or adjacent to some vertex in $D$? Given
$(U,E)$ with $U = \set{u_1, \dots, u_n}$, we construct an
instance $(C, \P, \tie)$ for \PTW under plurality where
$C = U \cup \set{c^*}$, where $\tie = O(\set{c^*}, U)$, and where
$P = (P_1, \dots, P_n)$. Let $N(u_i)$ be the set of neighbours of
$u_i$, and let $N(u_i)^* = N(u_i) \cup \set{u_i}$. For all
$i \in [n]$ we have
$ P_i = P(N(u_i)^*, U \setminus N(u_i)^*, \set{c^*})$.
We can show that there is a dominating set of size $k$
if and only if $c^*$ is a possible top-$k$ winner.
\end{hintproof}
\fi

Next, we show the hardness of \NTW and \PTW beyond the plurality rule.
Given a binary positional scoring rule $r$, we define the
\e{complementary-reversed} scoring rule, denoted $r^R$, to be the one
given by $r^R(m,i) = 1-r(m,m+1-i)$. For example, the
complementary-reversed rule of plurality is veto, and more generally,
the complementary-reversed rule of $t$-approval is $t$-veto.

\begin{lemma}
\label{lemma:reversing}
For every binary positional scoring rule $r$, there is a reduction
\begin{enumerate}
\item from \NTW for $r$ to the complement of \PTW for $r^R$;
\item from \PTW  for $r$ to the complement of \NTW  for $r^R$.
\end{enumerate}
\end{lemma}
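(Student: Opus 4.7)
The plan is to give a single transformation that yields both reductions. Given an instance $(C, \P, \tie, c, k)$ of \NTW for $r$, I would construct an instance $(C, \P', \tieprime, c, m-k)$ for \PTW under $r^R$, where $\P' = (P_1', \dots, P_n')$ reverses each partial order (so $a \succ' b$ in $P_i'$ iff $b \succ a$ in $P_i$) and $\tieprime$ is the reverse of $\tie$. The same transformation, with \NTW and \PTW swapped, will handle part (2).

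First I would observe that completions correspond bijectively: $\T = (T_1, \dots, T_n)$ completes $\P$ iff the profile $\T^R = (T_1^R, \dots, T_n^R)$, with each $T_i^R$ being $T_i$ reversed, completes $\P'$. Next I would establish the score identity: a candidate at position $j$ of $T_i$ sits at position $m+1-j$ of $T_i^R$, and because $r$ is binary the definition $r^R(m,i) = 1 - r(m,m+1-i)$ gives $s(T_i^R, c, r^R) = 1 - s(T_i, c, r)$, which summed over all voters yields $s(\T^R, c, r^R) = n - s(\T, c, r)$.

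From here the crucial rank identity follows: $\rank(\T, c) + \rank(\T^R, c) = m + 1$ where the first rank is taken with respect to $(r,\tie)$ and the second with respect to $(r^R, \tieprime)$. Indeed, $R_\T$ sorts candidates by decreasing $r$-score and breaks ties by $\tie$, so subtracting all scores from $n$ reverses the first criterion, and reversing $\tie$ reverses the tie-break; since equality of scores is preserved (the map $s \mapsto n-s$ is a bijection), $R_{\T^R}$ is precisely the reverse of $R_\T$. In particular, $c$ is a top-$k$ winner in $\T$ under $(r, \tie)$ iff $c$ is not a top-$(m-k)$ winner in $\T^R$ under $(r^R, \tieprime)$.

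Quantifying over completions then gives both reductions: $c$ is a necessary top-$k$ winner in $\P$ for $r$ iff every completion $\T$ of $\P$ places $c$ in the top $k$ iff no completion of $\P'$ places $c$ in the top $(m-k)$ under $r^R$, which is exactly the complement of \PTW; swapping the quantifiers yields part (2). The main obstacle I anticipate is the careful handling of tie-breaking: I need to verify that equal $r$-scores translate to equal $r^R$-scores so that the reversed tie-breaker inverts the tie resolution as claimed, rather than leaving a residual asymmetry that would corrupt the rank identity.
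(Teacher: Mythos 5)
Your proposal is correct and follows essentially the same route as the paper's proof: reverse each partial order and the tie-breaker, use the bijection between completions, derive $s(\T^R,c,r^R)=n-s(\T,c,r)$ and hence $\rank(\T^R,c)=m+1-\rank(\T,c)$, and then flip the quantifiers to trade necessary top-$k$ for the complement of possible top-$(m-k)$ and vice versa. The tie-breaking point you flag is handled exactly as you describe, so there is nothing to add.
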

\ifLong
\begin{proof}
For a partial order $P_i$, the \e{reversed order} is defined by $P_i^R \eqdef \set{x \succ y : (y \succ x) \in P_i}$. Note that $T_i$ extends $P_i$ if and only if $T_i^R$ extends $P_i^R$.

Given $(C, \P, \tie)$ as input under $r$ with $\P = (P_1, ..., P_n)$, consider $(C, \P', \tie^R)$ under $r^R$ where $\P' = (P_1^R, ..., P_n^R)$. Let $\T = (T_1, ..., T_n)$ be a completion of $\P$, observe the completion $\T' = (T_1^R, ..., T_n^R)$ of $\P'$. For every candidate $c$ and a voter $v_i$ we get $s(T_i^R, c, r^R) = 1-s(T_i, c, r)$ so overall $s(\T', c, r^R) = n-s(\T, c, r)$. Since the tie-breaking order is also reversed, it holds that $\rank(\T', c) = m+1-\rank(\T, c)$ for every $c \in C$. In same way, if $\T'$ is a completion of $\P'$ then by reversing the orders we get a completion $\T$ of $\P$ such that $\rank(\T, c) = m+1-\rank(\T', c)$ for every $c \in C$. We can deduce that for any candidate $c$ and integer $k$,
\begin{enumerate}
    \item $c$ is not a necessary top-$k$ winner w.r.t $(C, \P, \tie)$ and $r$ (there exists a completion $\T$ of $\P$ such that $\rank(\T, c) > k$) if and only if $c$ is a possible top-$(m-k)$ winner w.r.t $(C, \P', \tie^R)$ and $r^R$ (there exists a completion $\T'$ of $\P'$ such that $\rank(\T', c) \leq m+1-k$).
    \item $c$ is a possible top-$k$ winner w.r.t $(C, \P, \tie)$ and $r$ (there exists a completion $\T$ of $\P$ such that $\rank(\T, c) \leq k$) if and only if $c$ is not a necessary top-$(m-k)$ winner w.r.t $(C, \P', \tie^R)$ and $r^R$ (there exists a completion $\T'$ of $\P'$ such that $\rank(\T', c) > m-k$).
\end{enumerate}
From the above two points we conclude the two parts of the lemma, respectively.
\end{proof}
\fi
 
Combining Lemma~\ref{lemma:reversing} and
Theorem~\ref{thm:PluralityPosNecMem}, we conclude that:

\begin{theorem}
\label{thm:VetoPosNecMem}
For the veto rule, \NTW is coNP-complete and \PTW is
  NP-complete.
\end{theorem}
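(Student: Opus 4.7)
The plan is to derive Theorem~\ref{thm:VetoPosNecMem} as a direct corollary of Theorem~\ref{thm:PluralityPosNecMem} together with Lemma~\ref{lemma:reversing}, so essentially no new combinatorial work is required; the only thing I need to verify explicitly is that veto is the complementary-reversed rule of plurality, and then chase the reductions.

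First I would observe that plurality is given by $(1,0,\ldots,0)$, so that $r(m,m+1-i)$ equals $1$ precisely when $i=m$ and $0$ otherwise. Substituting into the definition $r^R(m,i) = 1 - r(m,m+1-i)$ therefore yields the vector $(1,\ldots,1,0)$, which is exactly the veto rule. Hence I may apply Lemma~\ref{lemma:reversing} with $r$ being plurality and $r^R$ being veto.

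Next I would apply the two parts of Lemma~\ref{lemma:reversing}. Part~(1) gives a polynomial-time reduction from \NTW for plurality to the complement of \PTW for veto; since \NTW for plurality is coNP-hard by Theorem~\ref{thm:PluralityPosNecMem}, the complement of \PTW for veto is coNP-hard, i.e., \PTW for veto is NP-hard. Symmetrically, part~(2) yields a reduction from \PTW for plurality (which is NP-hard by Theorem~\ref{thm:PluralityPosNecMem}) to the complement of \NTW for veto, so \NTW for veto is coNP-hard.

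The matching upper bounds are straightforward and I would dispatch them in one sentence: \PTW is in NP because one can guess a completion $\T$ of $\P$ together with $c$'s positions and check in polynomial time that $\rank(\T,c)\le k$, and dually \NTW is in coNP because a ``no''-certificate is a completion in which $\rank(\T,c)>k$. There is no genuine obstacle in this argument; the only pitfalls are notational, namely being careful that the parameter $k$ changes to $m-k$ under the reduction (which is polynomial in the input) and that the transformation $\tie\mapsto\tie^R$ preserves the tie-breaker format. Everything else is immediate from the cited results.
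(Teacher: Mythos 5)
Your proposal is correct and matches the paper's own derivation: the paper obtains Theorem~\ref{thm:VetoPosNecMem} exactly by combining Lemma~\ref{lemma:reversing} with Theorem~\ref{thm:PluralityPosNecMem}, and your verification that veto is the complementary-reversed rule of plurality, together with the direction-chasing of the two reductions and the routine NP/coNP membership arguments, is precisely what that one-line combination tacitly relies on.
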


\subsection{Beyond Plurality and Veto}
What about positional scoring rules other than plurality and veto? For any other pure positional scoring rule, \PW is NP-complete by the Classification
Theorem, so \PTW is also NP-complete (by choosing $k=1$). Combining
this observation with Theorems~\ref{thm:PluralityPosNecMem} and~\ref{thm:VetoPosNecMem}, we conclude that:
\begin{corollary}
    \label{cor:NecPosMemPure}
    \PTW is NP-complete for every pure positional scoring rule.
\end{corollary}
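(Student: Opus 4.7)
The plan is to assemble the corollary from three ingredients already available: the Classification Theorem (for pure rules outside plurality/veto), Theorems~\ref{thm:PluralityPosNecMem} and~\ref{thm:VetoPosNecMem} (for plurality and veto), and a straightforward NP-membership argument. I would first dispatch membership: given an instance $(C,\P,\tie,c,k)$, guess a completion $\T$ of $\P$, compute $s(\T,c')$ for every $c' \in C$ in polynomial time (since $r(m,j)$ is polynomial-time computable), and verify that $\rank(\T,c) \leq k$. Hence \PTW lies in NP for every positional scoring rule with polynomial-time computable scoring vector.

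For the hardness direction, I would split on the rule. If $r$ is plurality or veto, the corresponding NP-hardness is supplied directly by Theorem~\ref{thm:PluralityPosNecMem} or~\ref{thm:VetoPosNecMem}. Otherwise, $r$ is pure but distinct from plurality and veto, so by the Classification Theorem \PW is NP-complete for $r$. The key step is a reduction from \PW to \PTW that uses the parameter $k=1$.

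The reduction maps an instance $(C,\P,c)$ of \PW to the instance $(C,\P,\tie,c,1)$ of \PTW, where $\tie$ is any linear order on $C$ that puts $c$ at the top, i.e., $c \tie c'$ for every $c' \in C \setminus \set{c}$. To see correctness, I would observe that under such a tie breaker, $\rank(\T,c)=1$ holds in a completion $\T$ exactly when no candidate $c'$ satisfies $s(\T,c')>s(\T,c)$, which is precisely the condition that $c$ is a winner in $\T$. Hence $c$ is a possible top-$1$ winner in $(C,\P,\tie)$ if and only if $c$ is a possible winner in $(C,\P)$, and the reduction is polynomial.

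There is no real obstacle here: the potential pitfall is the subtle interaction between tie-breaking and the definition of ``winner'' (a winner need not be a top-$1$ winner in general), but this is neutralized by choosing $\tie$ to favor $c$ maximally, which aligns the two notions for the designated candidate. Combined with the membership argument and the plurality/veto cases, this establishes that \PTW is NP-complete for every pure positional scoring rule.
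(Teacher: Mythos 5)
Your proof is correct and follows essentially the same route as the paper: membership is immediate, plurality and veto are handled by Theorems~\ref{thm:PluralityPosNecMem} and~\ref{thm:VetoPosNecMem}, and every other pure rule inherits hardness from \PW via the Classification Theorem by setting $k=1$. You also make explicit the tie-breaker choice (placing $c$ first) needed so that ``top-$1$ winner'' coincides with ``winner,'' a detail the paper leaves implicit; this is a welcome clarification but not a different argument.
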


While we do not have a full classification for \NTW, we show
the hardness of \NTW under general conditions that include
the commonly studied rules. First, we can deduce hardness for every pure positional scoring rule with binary scores. We already established hardness for plurality and veto in Theorems~\ref{thm:PluralityPosNecMem} and \ref{thm:VetoPosNecMem}. For any other rule $r$ in this class, \PW is NP-complete for $r^R$ (by the Classification
Theorem), so a small change in the proof of Lemma~\ref{lemma:reversing} shows that \NTW is coNP-complete for $r$. We conclude that:
\begin{corollary}
    \NTW is coNP-complete for every pure positional scoring rules with binary scores.
\end{corollary}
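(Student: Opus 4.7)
The plan is to split on whether the given pure binary rule $r$ is plurality, veto, or neither. For plurality and veto the statement is immediate from Theorems~\ref{thm:PluralityPosNecMem} and~\ref{thm:VetoPosNecMem}, so the actual work lies in the subcase where $r$ is a pure binary rule distinct from both.

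First I would examine the complementary-reversed rule $r^R$. Two observations about $r^R$ are needed: $r^R$ is itself pure and binary (reversing the scoring vector and flipping $0 \leftrightarrow 1$ preserves monotonicity, and insertion of a score in $r$ corresponds to insertion of its complement at a mirrored position in $r^R$), and $r^R$ is neither plurality nor veto (reversal swaps plurality with veto and sends every other pure binary rule to one that is again neither). The Classification Theorem then gives that \PW is NP-complete for $r^R$, and since \PW is the $k=1$ special case of \PTW, it follows that \PTW is NP-hard for $r^R$.

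The crucial step is to transfer this NP-hardness of \PTW for $r^R$ into coNP-hardness of \NTW for $r$. The transformation underlying the proof of Lemma~\ref{lemma:reversing} is essentially an involution on instances: reversing every partial order together with the tie breaker is its own inverse, since $(r^R)^R = r$ for binary $r$. Applying Part~2 of that lemma with the roles of $r$ and $r^R$ swapped therefore yields a polynomial-time reduction from \PTW for $r^R$ to the complement of \NTW for $r$, and NP-hardness of the source gives coNP-hardness of the target. Membership of \NTW in coNP is routine, as one can guess a completion $\T$ of $\P$ and verify in polynomial time that $\rank(\T, c) > k$.

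The main obstacle to anticipate is verifying that the ``small change'' to the proof of Lemma~\ref{lemma:reversing} is truly benign: one must check that $r^R$ is indeed a pure binary rule distinct from plurality and veto whenever $r$ is, and that the bijection between completions and their reversals can be read in either direction without any loss of polynomial-time efficiency. Both are straightforward once the definitions are unfolded, but they are what makes the one-line reduction above go through.
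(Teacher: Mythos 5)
Your proposal is correct and follows essentially the same route as the paper: handle plurality and veto via Theorems~\ref{thm:PluralityPosNecMem} and~\ref{thm:VetoPosNecMem}, and for any other pure binary rule $r$ invoke NP-completeness of \PW for $r^R$ together with the reversing reduction of Lemma~\ref{lemma:reversing} (using $(r^R)^R = r$) to obtain coNP-hardness of \NTW for $r$. Your explicit checks that $r^R$ remains pure, binary, and distinct from plurality and veto are exactly the ``small change'' the paper alludes to.
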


To discuss rules with scores beyond binary, we define the class of \e{polynomially frequent scoring rules} where some score occurs frequently in the scoring vector. All commonly studied rules fall under this definition, except for Borda.

\begin{definition}
  A positional scoring rule $r$ is \e{polynomially frequent}
  if there exists a score $x$ and a constant $\varepsilon>0$ such that
  $|\set{i : r(m,i) = x}| = \Omega(m^\varepsilon)$.
\end{definition}
Examples of polynomially frequent rules include $t$-approval (where
$x=0$) and $t$-veto (where $x=1$). Another rule is $(2,1,\dots,1,0)$
that has been studied in depth~\cite{DBLP:conf/atal/BaumeisterRR11}.
This class strictly generalizes that of the \e{almost
  constant} scoring rules that has also been studied in the context of
the complexity of winner
determination~\cite{DBLP:conf/ijcai/KimelfeldKS18,DBLP:conf/aaai/KenigK19}.

We will prove that \NTW is hard for all pure polynomially
frequent scoring rules. For that, we need a definition and a lemma.
Let $r$ and $r'$ be two positional scoring rules.
We say that $r'$ \e{polynomially contains} $r$ if there exist a
polynomial $p(m)$, an index $i_m \leq p(m)-m$
and two numbers $a_m>0$ and $b_m$ such that
$r(m,j) = a_m \cdot r'(p(m),i_m+j) + b_m$ for all $m, j \in [m]$. For instance, $t$-approval polynomially contains plurality for every fixed $t$, by choosing $p(m) = m+t-1$, $i_m = t-1$, $a_m = 1$
and $b_m = 0$. Similarly, $t$-veto polynomially contains veto.

\begin{lemma}
  \label{lemma:polyContained}
  Let $r$ and $r'$ be two positional scoring rules.
  If $r'$ polynomially contains $r$, then there is a reduction
  \begin{enumerate}
    \item from \NTW for $r$ to \NTW for $r'$;
    \item from \PTW for $r$ to \PTW for $r'$.
  \end{enumerate}
\end{lemma}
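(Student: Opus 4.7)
The plan is to reduce the problems for $r$ to those for $r'$ by padding every voter's ranking with $p(m)-m$ ``dummy'' candidates, arranged so that in every completion the candidates of $C$ occupy exactly the positions $i_m+1,\dots,i_m+m$. On those positions, the hypothesis $r(m,j)=a_m\cdot r'(p(m),i_m+j)+b_m$ with $a_m>0$ tells us that the $r'$-scores of the candidates in $C$ are an increasing affine transformation of their $r$-scores, so the ranking of $C$ is preserved.

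Concretely, given $(C,\P,\tie,c,k)$ under $r$ with $|C|=m$ and $\P=(P_1,\dots,P_n)$, I would introduce two fresh, disjoint sets $D_{\mathsf{top}}$ and $D_{\mathsf{bot}}$ of sizes $i_m$ and $p(m)-m-i_m$, let $C'=C\cup D_{\mathsf{top}}\cup D_{\mathsf{bot}}$, and for each voter $i$ define $P_i'=P_i\cup P(D_{\mathsf{top}},C,D_{\mathsf{bot}})$ using the paper's partitioned-order notation, so that every element of $D_{\mathsf{top}}$ precedes every candidate in $C$, which in turn precedes every element of $D_{\mathsf{bot}}$. The new tie-breaker $\tieprime$ would be any linear order on $C'$ that places all of $D_{\mathsf{top}}$ above $C$ above all of $D_{\mathsf{bot}}$ and agrees with $\tie$ on $C$. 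The reduction outputs $(C',\P',\tieprime,c,k')$ with $k'=k+i_m$, which is clearly computable in polynomial time since $p(m)$ is polynomial.

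For correctness, the partial-order constraints force every completion $\T'$ of $\P'$ to place the top-dummies in positions $1,\dots,i_m$ and the bottom-dummies in positions $i_m+m+1,\dots,p(m)$, so restricting each $T_i'$ to $C$ induces a bijection with the completions $\T$ of $\P$. A candidate $c''\in C$ at position $j$ of $T_i$ then lies at position $i_m+j$ of $T_i'$ and contributes $r'(p(m),i_m+j)=(r(m,j)-b_m)/a_m$; summing over voters, $s(\T',c'',r')=(s(\T,c'',r)-nb_m)/a_m$, a strictly increasing function of $s(\T,c'',r)$. Combined with $\tieprime$ agreeing with $\tie$ on $C$, this gives that $R_{\T'}$ restricted to $C$ equals $R_\T$. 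Moreover, each top-dummy's $r'$-score is at least $n\cdot r'(p(m),i_m)\geq n\cdot r'(p(m),i_m+1)$, which is an upper bound on the $r'$-score of every $c''\in C$; symmetrically for the bottom-dummies. The tie-breaker $\tieprime$ resolves any resulting equalities by placing $D_{\mathsf{top}}$ above $C$ and $C$ above $D_{\mathsf{bot}}$, so $\rank(\T',c)=i_m+\rank(\T,c)$. Hence $c$ is a top-$k$ winner in $\T$ iff $c$ is a top-$k'$ winner in $\T'$; universal quantification over completions yields the \NTW reduction, and existential quantification yields the \PTW reduction.

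The hard part will be handling ties: because $r'$ can be flat over long stretches, top-dummies, real candidates, and bottom-dummies may well share the same $r'$-score, and the whole argument depends on $\tieprime$ being set up so that the three blocks are kept strictly separated in $R_{\T'}$ whenever their scores coincide. Once this is arranged, the bijection between completions of $\P$ and $\P'$ together with the positive affine transformation of scores carry the rest of the proof through in a routine manner.
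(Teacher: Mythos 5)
Your construction is exactly the paper's: pad each vote with $i_m$ dummy candidates forced above $C$ and $p(m)-m-i_m$ forced below, extend the tie-breaker blockwise, shift $k$ to $k+i_m$, and use the bijection between completions together with the increasing affine relation between $r$- and $r'$-scores to get $\rank(\T',c)=\rank(\T,c)+i_m$. The argument is correct and essentially identical to the paper's proof (you are in fact slightly more careful than the paper in spelling out why the tie-breaker keeps the dummy blocks separated from $C$ when $r'$ is flat across the boundary positions).
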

  
\ifLong
  \begin{proof} Let $p(m)$, $i_m$, $a_m$, and $b_m$ be the functions
  that realize the polynomial containment. Given 
  $(C,\P, \tie)$ as input under $r$ with $\P = (P_1, \dots, P_n)$,
  consider
  the input $(C',\P', \tieprime)$ under  $r'$ where:
  \begin{itemize}
  \item $C' = C \cup D_1 \cup D_2$ where $D_1$ and $D_2$ are disjoint
    sets
    of new candidates with
    $ |D_1| = i_m$ and $|D_2| = p(m)-i_m-m$.
    Note that $|C'| = p(m)$.
  \item $\P' = (P'_1, \dots, P'_n)$ where, in each
    $P'_j$, the $i_m$ highest-ranked candidates are the ones of $D_1$, the
    $p(m)-i_m-m$ lowest-ranked candidates are the ones of $D_2$, and
    between
    $D_1$ and
    $D_2$
    the candidates of $C$ are the same as in $P_j$. In our notation,
    $P'_j \eqdef P_j \cup P(D_1, C, D_2)$.
  \item In $\tieprime$, the $i_m$ highest-ranked candidates
    are the ones of $D_1$, the $p(m)-i_m-m$ lowest-ranked candidates
    are the ones of $D_2$, and
    between $D_1$ and $D_2$ the candidates of $C$ are the same as in
    $\tie$. In our notation,
    $\tieprime = O(D_1) \circ \tie \circ O(D_2)$.
  \end{itemize}
  Let $\T =(T_1, \dots, T_n)$ be a completion of
  $\P$. Observe the completion $\T' =(T'_1, \dots, T'_n)$ of $\P'$
  where $T'_j = O(D_1) \circ T_j \circ O(D_2)$. Since $r'$
  polynomially contains in $r$, for every $c \in C$ we get that
  \begin{align*}
      s(\T', c, r') &= \sum_{j=1}^n s(T'_j, c, r') = \sum_{j=1}^n a_m \cdot s(T_j, c, r) + b_m \\
      &= a_m \cdot s(\T, c, r) + n b_m\,.
  \end{align*}
  Hence, by the definition of $\tieprime$, the positions of the candidates satisfy
  $\rank(\T', c) = \rank(\T, c)+i_m$. Conversely, let $\T'$ be a
  completion of $\P'$. Every $T'_j$ has to be of the form
  $O(D_1) \circ O(C) \circ O(D_2)$, so removing $D_1$ and $D_2$ from
  all the linear
  orders gives a completion $\T$ of $\P$ such that
  $\rank(\T, c) = \rank(\T', c) - i_m$ for all $c \in C$. We conclude
  that for all $c \in C$ and $k$ it holds that
  $c$ is a necessary (resp., possible) top-$k$ winner w.r.t.~$(C,\P, \tie)$ and
  $r$ if and only if $c$ is a
  necessary (resp., possible) top-$(k+i_m)$ winner w.r.t.~$(C',\P', \tieprime)$ and $r'$.
  \end{proof}
\fi

Note that Lemma~\ref{lemma:polyContained} can be applied for non-pure rules, so the second item can be used for rules not covered by Corollary~\ref{cor:NecPosMemPure}. Consequently, we get the following general hardness.

\begin{theorem}
    \NTW is coNP-complete for every pure polynomially frequent scoring rule.
\end{theorem}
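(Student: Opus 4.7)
Membership in coNP is routine: a certificate that $c$ is not a necessary top-$k$ winner is a completion $\T$ of $\P$ together with a list of $k$ candidates outranking $c$ in $R_{\T}$, which is verifiable in polynomial time. For coNP-hardness, the plan is to invoke Lemma~\ref{lemma:polyContained}(1): I will show that $r$ polynomially contains plurality or veto, both of whose $\NTW$ problems are already coNP-hard by Theorems~\ref{thm:PluralityPosNecMem} and~\ref{thm:VetoPosNecMem}.

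Let $x$ be the frequent score and $\varepsilon > 0$ the constant from the definition, so that $\Vec{s}_m$ contains $\Omega(m^\varepsilon)$ entries equal to $x$. Since $\Vec{s}_m$ is non-increasing, these entries occupy a contiguous block. I first observe a dichotomy forced by purity: either $x$ is not the maximum of $\Vec{s}_m$ for all sufficiently large $m$, or $x$ is the maximum of $\Vec{s}_m$ for every $m$. Indeed, once some $\Vec{s}_{m_0}$ has a position with score strictly above $x$, each subsequent score insertion preserves such a position. In the second case, $\Vec{s}_m(1) > \Vec{s}_m(m)$ forces $x$ to be strictly above the minimum, and the analogous purity argument shows $x$ is never the minimum.

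Case~1 ($x$ is eventually not the maximum). I realize plurality by $r$. Given $m$, choose a polynomial $p(m)$ (asymptotically $O(m^{1/\varepsilon})$) large enough that the $x$-block in $\Vec{s}_{p(m)}$ has length at least $m-1$; polynomial frequency guarantees this for large $m$. Let $a$ be the first index of the block, set $A \eqdef \Vec{s}_{p(m)}(a-1) > x$, and take $i_m \eqdef a - 2$, $a_m \eqdef 1/(A - x) > 0$, and $b_m \eqdef -x/(A - x)$. A direct computation gives $a_m \cdot \Vec{s}_{p(m)}(i_m + 1) + b_m = 1$, matching plurality at position $1$, and $a_m \cdot \Vec{s}_{p(m)}(i_m + j) + b_m = 0$ for $j = 2, \dots, m$, matching the remaining positions. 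The constraint $i_m + m \leq p(m)$ follows from the block length. Case~2 ($x$ is the maximum, hence not the minimum) is symmetric: use the last $m - 1$ positions of the $x$-block together with the first subsequent position of score $B < x$, and realize veto via $a_m = 1/(x - B)$ and $b_m = -B/(x - B)$.

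The step I expect to require the most care is the coordination of the two asymptotic regimes: polynomial frequency is needed so that the $x$-block inside $\Vec{s}_{p(m)}$ accommodates $m - 1$ consecutive positions, while purity is needed to guarantee that the relevant side of that block is flanked by a strictly different score for \emph{every} $p(m)$ rather than merely for some $m$. Once the containment is established, Lemma~\ref{lemma:polyContained}(1) transfers the coNP-hardness of $\NTW$ from plurality or veto to $r$, completing the proof.
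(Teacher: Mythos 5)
Your proposal is correct and follows essentially the same route as the paper: a case split on whether the frequent score $x$ is ever exceeded in some $\Vec{s}_m$ (so that purity propagates a strictly larger, respectively strictly smaller, flanking score to all larger vectors), realizing plurality or veto inside $r$ via polynomial containment, and then transferring hardness through Lemma~\ref{lemma:polyContained} from Theorems~\ref{thm:PluralityPosNecMem} and~\ref{thm:VetoPosNecMem}. Your explicit affine coefficients $a_m = 1/(A-x)$, $b_m = -x/(A-x)$ (and the veto analogue) and the window placement at the boundary of the $x$-block are in fact stated more carefully than in the paper's own sketch, which writes the inverse map $a_m = y_m - x$, $b_m = x$.
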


\ifLong
\begin{proof}
  Let $r$ be a pure polynomially frequent scoring rule, and denote $r$
  by $\set{\Vec{s}_m}_{m>1}$.
  Let $m'$ be the minimal $m$ where $x$ occurs in
  $\Vec{s}_m$, and let $j$ be the index of $x$ in $\Vec{s}_{m'}$.
  First, consider the case where $j > 1$, which means that $\Vec{s}_{m'}$ contains both $x$ and a score greater than $x$. Since the rule is pure we can deduce that for every $m \geq m'$, $\Vec{s}_m$ also contains both $x$ and a score greater than $x$. Then, there exists some score
  $y_m > x$ such that $\Vec{s}_{O(m^{1/\varepsilon})}$ contains the
  vector $(y_m, x, \dots, x)$ of length $m+1$. Choosing
  $p(m) = O(m^{1/\varepsilon})$, $i_m$ as the biggest index of $y_m$, 
  $a_m = y_m-x$ and $b_m = x$ shows that, in this case, $r$ polynomially contains plurality. The hardness results then
  follow from Theorem~\ref{thm:PluralityPosNecMem} and
  Lemma~\ref{lemma:polyContained}.
  Now consider the case where $j = 1$, hence for every $m \geq m'$, $\Vec{s}_m$ contains both $x$ and a score smaller than $x$ (since the rule is pure). Then, there exist some score $y_m < x$ such
  that $\Vec{s}_{O(m^{1/\varepsilon})}$ contains the vector
  $(x, \dots, x, y_m)$ of length $m+1$. Choosing
  $p(m) = O(m^{1/\varepsilon})$, $i_m$ as the smallest index of $x$,
  $a_m = x-y_m, b_m = y_m$ shows that in this case $r$ polynomially contains veto. Hardness for $r$ then follows from
  Theorem~\ref{thm:VetoPosNecMem} and Lemma~\ref{lemma:polyContained}.
\end{proof}
\else
\begin{hintproof}
  Let $r$ be a pure polynomially frequent scoring rule, and denote $r$
  by $\set{ \Vec{s}_m }_{m \in \natural^+}$.
  Let $m_x$ be the minimal $m$ where $x$ occurs in
  $\Vec{s}_m$, and let $i_x$ be the index of $x$ in $\Vec{s}_{m_x}$. If $i_x > 1$ then $r$ polynomially contains plurality, and the hardness results follow from Theorem~\ref{thm:PluralityPosNecMem} and
  Lemma~\ref{lemma:polyContained}.
  If $i_x = 1$ then $r$ polynomially contains veto, and the hardness results follow from
  Theorem~\ref{thm:VetoPosNecMem} and Lemma~\ref{lemma:polyContained}.
\end{hintproof}
\fi

To complete the picture, we are still missing the complexity of the
top-$k$ winners for the (pure) positional scoring rules that are not
polynomially frequent. An example that stands out is the Borda rule.
This is left as an open direction for future investigation that we
have found quite challenging. For the special case of Borda,
we can prove the hardness of \NTW.

\begin{theorem}
\label{thm:BordaNec}
\NTW is coNP-complete for the Borda rule.
\end{theorem}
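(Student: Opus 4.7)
The plan is to reduce the possible-winner problem \PW for Borda, which is NP-complete by the Classification Theorem, to the complement of \NTW for Borda. Membership of \NTW in coNP is immediate: a completion $\T$ in which $\rank(\T, c) > k$ is a polynomial-size certificate of a ``no'' instance, so it suffices to establish coNP-hardness.

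The guiding observation is that Borda is symmetric under preference reversal. For a linear order $T_i$, let $T_i^R$ denote its reversal; if $x$ has position $j$ in $T_i$, then its position in $T_i^R$ is $m+1-j$, so $s(T_i^R, x) = (m-1) - s(T_i, x)$ and therefore $s(\T^R, x) = n(m-1) - s(\T, x)$. In particular, $c$ attains the maximum Borda score in $\T$ if and only if $c$ attains the minimum Borda score in $\T^R$, and $T_i \mapsto T_i^R$ is a bijection between completions of $P_i$ and completions of its reversal $P_i^R$.

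Given a \PW instance $(C, \P, c)$ with $|C| = m$ and $\P = (P_1, \dots, P_n)$, I would construct the \NTW instance $(C, \P^R, \tie, c, k)$ where $\P^R = (P_1^R, \dots, P_n^R)$, $k = m-1$, and $\tie$ is any linear order placing $c$ at the very bottom (so that every other candidate $\tie$-beats $c$). If $c$ is a possible winner, pick $\T$ in which $c$ has maximum Borda score; then in $\T^R$ the candidate $c$ has (possibly tied) minimum score, and since $\tie$ places $c$ last, we get $\rank(\T^R, c) = m$, so $c$ is not a necessary top-$(m-1)$ winner. Conversely, if some completion $\T'$ of $\P^R$ witnesses $\rank(\T', c) = m$, then $c$ has minimum Borda score in $\T'$, so reversing back yields a completion of $\P$ in which $c$ has maximum score, so $c$ is a possible winner. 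Thus \PW reduces to the complement of \NTW, giving coNP-hardness.

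The main technical care is in the tie-breaking convention. Because \PW allows $c$ to merely tie for the maximum score, the corresponding completion after reversal may have $c$ only tied for the minimum; placing $c$ at the bottom of $\tie$ is precisely what forces such ties to push $c$ into the last position, cleanly separating the ``yes'' and ``no'' cases. Everything else is a direct computation of Borda scores together with the bijection between completions of $\P$ and completions of $\P^R$.
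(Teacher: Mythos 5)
Your proof is correct, but it takes a genuinely different and much shorter route than the paper. The paper proves Theorem~\ref{thm:BordaNec} by a direct reduction from X3C, using dummy candidates and the circular-voting-block machinery of Baumeister et al.\ to pin down the achievable Borda scores, together with a delicate counting argument (the lower/upper bounds on $\sum_{e\in Q}s(\T^1,e)$ forcing all induced degrees to be one). You instead exploit the fact that Borda is its own complementary reversal: $s(\T^R,x)=n(m-1)-s(\T,x)$, so the map $\T\mapsto\T^R$ is a score-order-reversing bijection between completions of $\P$ and of $\P^R$. This is exactly the idea of the paper's Lemma~\ref{lemma:reversing}, which the paper states only for binary rules; you extend it to Borda and combine it with the NP-completeness of \PW for Borda from the Classification Theorem. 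Your handling of ties is right: placing $c$ at the bottom of $\tie$ makes $\rank(\T',c)=m$ equivalent to $c$ being a co-minimizer, which under reversal is exactly the co-winner condition used in \PW. What each approach buys: yours is a few lines and reuses known hardness, but it only places the hard instances at $k=m-1$ (it is really a hardness proof for the ``possible Borda loser''), whereas the paper's X3C construction yields hardness at $k=q$ with $|C|=2|E|$, i.e., for $k$ far below $m$, and its circular-block gadget is reused later in the proof of Theorem~\ref{thm:PolyRulePosMemk}. One small caveat: the paper claims all results hold for all three tie-breaking conventions; under convention \e{(a)} (top-$k$ w.r.t.\ at least one tie-breaking order) your reduction needs the possible \e{unique} winner variant of \PW for Borda rather than the co-winner variant, which is also NP-complete, so the argument survives with that adjustment.
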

The proof, discussed next, is
nontrivial and heavily relies on the specific structure of this rule.

\subsubsection{Proof of Theorem~\ref{thm:BordaNec}}
\def\Scover{S_{\mathsf{cvr}}}

We use the technique of \e{circular voting blocks} of Baumeister, Roos
and J{\"{o}}rg~\shortcite{DBLP:conf/atal/BaumeisterRR11}. For a set
$A = \set{a_1, \dots, a_t}$ and $i \in [t]$, the \e{$i$th circular
  vote} is
\begin{align*}
    M_i(A) \eqdef (a_i, a_{i+1}, \dots, a_t, a_1, a_2, \dots, a_{i-1})\,.
\end{align*}

We reduce from X3C as defined in the proof of Theorem~\ref{thm:PluralityPosNecMem}. Given an X3C instance $(U,E)$ with
$U = \set{u_1, \dots, u_{3q}}$ and $E = \set{e_1, \dots, e_m}$, we
construct an input $(C, \P, \tie)$ for \NTW under Borda.
The candidates are $C = E \cup D \cup \set{c^*}$ where
$D = \set{d_1, \dots,d_{m-1}}$ and $\tie = O(E, \set{c^*}, D)$. The
voting profile is the concatenation (union)
$\P = \P^1 \circ \T^2 \circ \T^3$ of the three parts described
next.

First, $\P^1 = (P_1^1, \dots, P_{3q}^1)$. For every $i \in [3q]$, only
edges that cover $u_i$ can receive a score of $2m-1$. Edges that do
not cover $u_i$ can receive at most $m-1$, and $c^*$ receives $0$.
Formally, denote by $\deg(u_i) = |E(u_i)|$ the degree of $u_i$ in the
graph and let
$D_{\leq j} = \set{d_1, \dots, d_j}$ and $D_{> j} = \set{d_{j+1}, \dots,
  d_{m-1}}$. We assume that $1 \leq \deg(u_i) \leq m-1$, otherwise the
problem is trivial. The partial order is
%    \begin{align*}
        $P_i^1 = P(E(u_i), D_{\leq m-\deg(u_i)}, 
        D_{> m-\deg(u_i)} \cup (E \setminus E(u_i)), \set{c^*})$.
%    \end{align*}
        Note that for every $e \in E \setminus E(u_i)$, the number of
    candidates ranked above $e$ in $P_i$ is
    $|E(u_i)| + m-\deg(u_i) = m$, so indeed it receives a score of at
    most $m-1$.

    Second, $\T^2$ is composed of
    $\Scover \eqdef 3(2m-1) + 3 \sum_{i=1}^{q-1} (m-i)$ copies of
    the profile $(T^2_1, \dots, T^2_m)$. For every $i \in [m]$ the
    order $T^2_i$ is constructed in the following way. Start with
    $M_i(E)$, then insert $D$ and $c^*$ such that the score of $c^*$
    is $m$ and scores of $e_i, \dots, e_m, e_1, \dots, e_{i-1}$ are
    $2m-1, 2m-3, \dots, 5, 3, 0$ accordingly if $m$ is even. If $m$ is
    odd then the scores of the edges are the same as before but
    with $m-1$ instead of $m$ and 1 instead of 0. Note that in both
    cases, the sum of the scores of the edges is the same.

    Finally,
    $\T^3$ consists of $4(3q + \Scover)$ copies of the profile
    $(T^3_1, \dots, T^3_{m+1})$. For this part only, denote
    $e_{m+1} = c^*$. For every $i \in [m+1]$,
    $T^3_i = M_i(\set{e_1, \dots, e_{m+1}}) \circ O(D)$.

\ifLong
We state some observations regarding the profile. In $\T^2$, the score of $c^*$ is
\begin{align*}
    s(\T^2, c^*) = \Scover \cdot s((T^2_1, \dots, T^2_m), c^*) =  \Scover \cdot m^2
\end{align*}
For every $e \in E$, the score in $\T^2$ is
\begin{align*}
    s(\T^2, e) &= \Scover \sum_{i=2}^{m} (2i-1) =  \Scover \rpar{m^2 - 1}\\
    &= s(T^2, c^*) - \Scover\,.
\end{align*}
In $\T^3$, for every $d \in D$, the score is
\begin{align*}
    s(\T^3, d) &\leq 4(3q + \Scover)(m+1)(m-2) \leq 4m^2 (3q +\Scover)
\end{align*}
For every $c \in C \setminus D$, the score in $\T^3$ is
\begin{align*}
    & s(\T^3, c) = 4(3q + \Scover) \sum_{i=1}^{m+1} (2m-i) \\
    &= 2(3q + \Scover)(3m^2+m-2) \geq 6m^2 (3q + \Scover)\,.
\end{align*}
Hence, for every pair $d \in D, c \in C \setminus D$ it holds that
$s(\T^3, c) - s(\T^3, d) \geq 2m^2 (3q + \Scover)$. Let
$\T = \T^1 \circ \T^2 \circ \T^3$ be a completion of $\P$. For every
pair $d \in D, c \in C \setminus D$, the score in $\T^1, \T^2$ satisfy $s(\T^1, d) - s(\T^1, c) < 6qm$ and
$s(\T^2, d) - s(\T^2, c) < \Scover \cdot 2m^2$ (by the definition of the Borda rule). Combining these two inequalities with what we showed for $T^3$, we can deduce that $s(\T, c) - s(\T,d) > 0$, which means that the candidates in
$C \setminus D$ always defeat all candidates in $D$.
\else
By analyzing the scores of candidates in $\T^2, \T^3$ we can show that for every $e \in E$, the score in $\T^2 \circ \T^3$ is $s(\T^2 \circ \T^3, e) = s(\T^2 \circ \T^3, c^*) - \Scover$ and the candidates in
$C \setminus D$ always defeat all candidates in $D$.
\fi

\begin{claim}
If there is an X3C, then $c^*$ is not a necessary top-$q$ winner.
\end{claim}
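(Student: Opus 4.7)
The plan is to take an exact cover $Q = \set{e_{i_1}, \dots, e_{i_q}}$ (for each $u_i$, let $e_{i_{j_i}} \in Q$ be the unique edge of $Q$ covering $u_i$) and exhibit a single completion $\T = \T^1 \circ \T^2 \circ \T^3$ of $\P$ in which every edge of $Q$ ties with $c^*$ in total score; since the tie-breaker $\tie$ places all of $E$ above $c^*$, each edge of $Q$ then beats $c^*$ in $R_\T$, giving $\rank(\T, c^*) \geq q+1$ and showing that $c^*$ is not a top-$q$ winner in $\T$. Only $\T^1$ is under my control. Combining the identities already recorded in the excerpt, $s(\T^1, c^*) = 0$, $s(\T^2, c^*) - s(\T^2, e) = \Scover$ and $s(\T^3, c^*) = s(\T^3, e)$ for every $e \in E$, the goal reduces to constructing $\T^1$ so that $s(\T^1, e) = \Scover$ for every $e \in Q$.

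I will build $\T^1$ by placing the edge $e_{i_{j_i}}$ at position $1$ of each $P_i^1$, so each edge of $Q$ accumulates $3(2m-1)$ from the three voters it covers. The remaining $\Scover - 3(2m-1) = 3\sum_{i=1}^{q-1}(m-i)$ must then be supplied by the top $q-1$ slots of the third block, whose Borda scores are $m-1, m-2, \dots, m-q+1$. A direct count shows that if in every $P_i^1$ these $q-1$ slots are occupied by the $q-1$ edges of $Q$ other than $e_{i_{j_i}}$, the total mass the third block delivers to $Q$ is exactly $3q \sum_{i=1}^{q-1}(m-i)$, i.e., $q$ times the residual demand. The budget thus matches on the nose, and the whole argument hinges on distributing this mass \emph{evenly}, so that each $e \in Q$ receives each third-block score $m-k$ exactly three times.

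This even distribution is the main obstacle, and I will cast it as a decomposition problem on the bipartite availability graph $H$ whose sides are the $3q$ voters and the $q$ edges of $Q$, with $(i,e)$ an $H$-edge iff $e \neq e_{i_{j_i}}$. The graph $H$ is $(q-1, 3(q-1))$-biregular, and I need it to decompose into $q-1$ edge-disjoint $(1,3)$-biregular spanning subgraphs; the $k$-th will prescribe which edge of $Q$ is placed in third-block position $k$ at each voter. To obtain such a decomposition I will split each of the $q$ vertices on the $Q$-side of $H$ into three copies and distribute the $3(q-1)$ incident $H$-edges evenly among them, so every copy receives $q-1$, producing a $(q-1)$-regular bipartite graph on $3q + 3q$ vertices; by K{\"o}nig's edge-coloring theorem this graph admits a proper $(q-1)$-edge-coloring, and each color class, read back on the original graph with copies re-identified, is a $(1,3)$-biregular spanning subgraph of $H$, as required.

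Once $\T^1$ is built from this decomposition and the remaining positions of each $P_i^1$ are filled arbitrarily with $D$-candidates and non-cover edges, I will close the proof with two routine checks. First, by construction $s(\T, e) = s(\T, c^*)$ for every $e \in Q$, so the tie-breaker $\tie$ puts each of the $q$ cover edges strictly above $c^*$ in $R_\T$. Second, a non-cover edge $e'$ scores at most $2m-2$ per top-block appearance (its position is $\geq 2$, since position $1$ is reserved for $Q$) and at most $m-q$ per third-block appearance (the top $q-1$ slots are taken by $Q$), for a total strictly less than $\Scover$; combined with the already-established fact that the $D$-candidates are dominated, this ensures that exactly the $q$ edges of $Q$ sit above $c^*$ in $R_\T$, yielding $\rank(\T, c^*) > q$ as required.
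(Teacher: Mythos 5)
Your proof is correct, and at the level of the overall plan it coincides with the paper's: both arguments build a completion of $\P^1$ in which each cover edge sits at the top position of the three voters it covers (collecting $3(2m-1)$) and occupies the first $q-1$ slots of the third block for all other voters, with the scores $m-1,\dots,m-q+1$ spread so evenly that every cover edge receives each of them exactly three times and hence totals exactly $\Scover$. The only difference is how that balanced assignment is produced. The paper relabels the cover as $e_1,\dots,e_q$ with $e_i=\set{i,q+i,2q+i}$ and writes the assignment down explicitly by a cyclic shift --- $e_i$ receives $m-j$ from the three voters of $e_{i+j}$ (indices mod $q$) --- which is precisely one $(1,3)$-biregular factor of your availability graph $H$ for each $j$. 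You instead prove the factorization exists abstractly, by splitting each $Q$-side vertex into three copies and applying K\"onig's edge-coloring theorem to the resulting $(q-1)$-regular bipartite graph. Both are valid; the paper's route is shorter and fully constructive, while yours avoids the relabeling and shows the balanced schedule exists for structural reasons independent of the cyclic trick. Two small remarks: the bound on non-cover edges in your final paragraph is not needed for the claim (it suffices that the $q$ cover edges tie with $c^*$ in score and precede it in $\tie$, so $\rank(\T,c^*)\ge q+1$), and the feasibility of packing the $q-1$ other cover edges into the top of the third block uses $q\le m$, which holds whenever an exact cover exists.
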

\begin{proof}
  
  Assume, w.l.o.g., that the exact cover is
  $Q = \set{e_1, \dots, e_q}$ and every edge in the cover is
  $e_i = \set{i, q+i, 2q+i}$. Define a completion
  $\T = \T^1 \circ \T^2 \circ \T^3$, $\T^1 = (T^1_1, \dots, T^1_n)$. In
  $\T^1$, every edge in the cover $e_i \in Q$ receives the following scores:

  \begin{itemize}
    \item $e_i$ gets $2m-1$ from $T^1_i, T^1_{q+i}, T^1_{2q+i}$, that is, $e_i$ is placed at the top in the vertices which it covers.
    \item $e_i$ gets $m-1$ from the vertices of $e_{i+1}$, gets $m-2$ from the vertices of $e_{i+2}$ and so on (when we reach $e_q$ we go to $e_1$ and continue until $e_{i-1}$). Note that in this way there is no vertex that should give the same score to two different edges, and since $q \leq m$ the range of scores $e_i$ gets is $\set{m-1, m-2,  \dots, m-q+1} \subseteq \set{m-1, \dots, 1}$ as required by the definition of $\P^1$.
    \end{itemize}  
    For every edge $e \in Q$, the total score in $\T^1$ is
    $s(\T^1, e) = 3(2m-1) + 3 \sum_{i=1}^{q-1} (m-i) = \Scover$ and
    recall that $s(\T^1, c^*) = 0$. Combining this with what we already
    know for $\T^2$ and $\T^3$ implies that \begin{align*}
    s(\T, e) - s(\T, c^*) & = \Scover- \Scover+ 0 = 0\,.
\end{align*}
All candidates in $Q$ defeat $c^*$, hence $c^*$ is not a
necessary top-$q$ winner.
\end{proof}

\begin{claim}
If $c^*$ is not a necessary top-$q$-winner, then there is an X3C.
\end{claim}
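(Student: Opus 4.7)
The plan is to start from any completion $\T$ witnessing that $c^*$ is not a necessary top-$q$ winner, and to extract an exact cover from the set of $q$ candidates that outrank $c^*$ in $R_{\T}$. The argument hinges on matching a score lower bound (forced by $c^*$ being outranked) against a structural upper bound on the total $\P^1$-score of these candidates that is tight only when they form an exact cover.

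First, I would show that the $q$ outranking candidates all lie in $E$. Since $\tie = O(E, \set{c^*}, D)$ ranks every edge above $c^*$ and $c^*$ above every $d \in D$, and since the preliminary score analysis already establishes $s(\T, c^*) > s(\T, d)$ for every $d \in D$ in every completion, no dummy in $D$ can outrank $c^*$. Call the set of outranking edges $Q$, so $|Q| = q$. Using $s(\T^1, c^*) = 0$ together with the identity $s(\T^2 \circ \T^3, e) = s(\T^2 \circ \T^3, c^*) - \Scover$ for every $e \in E$ (already recorded before the claim), the outranking condition $s(\T, e) \geq s(\T, c^*)$ becomes $s(\T^1, e) \geq \Scover$. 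Summing over $e \in Q$ yields $\sum_{e \in Q} s(\T^1, e) \geq q\Scover$.

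The heart of the proof, and the step I expect to be the main obstacle, is the matching upper bound $\sum_{e \in Q} s(\T^1, e) \leq q\Scover$, attained only when $Q$ is an exact cover. For each voter $v_i$ of $\P^1$, set $a_i = |Q \cap E(u_i)|$, so that $\sum_{i=1}^{3q} a_i = \sum_{e \in Q} |e| = 3q$. In any completion of $P_i^1$, the edges in $Q \cap E(u_i)$ are confined to the top block of positions (scores within $\set{2m-1, \dots, 2m-\deg(u_i)}$), while the edges in $Q \setminus E(u_i)$ are confined to the lower block (scores within $\set{m-1, \dots, 1}$). Placing each group as high as possible in its block, a short calculation shows that voter $v_i$'s contribution to $\sum_{e \in Q} s(T_i^1, e)$ is at most $qm - \frac{q(q+1)}{2} + a_i(m + q - a_i)$. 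Summing over $i$ and rewriting $\sum_i a_i(m+q-a_i) = 3q(m+q) - \sum_i a_i^2$, the overall maximum is attained precisely when $\sum_i a_i^2$ is minimized subject to $\sum_i a_i = 3q$ with $3q$ nonnegative integer variables; this uniquely forces $a_i = 1$ for every $i$, and a direct check confirms that the resulting value equals $q\Scover$. Combining the two bounds forces equality throughout, so $a_i = 1$ for every vertex $u_i$, meaning that every vertex of $U$ is covered by exactly one edge of $Q$ and hence $Q$ is an exact cover.
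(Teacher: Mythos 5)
Your proposal is correct and follows essentially the same route as the paper: establish $Q\subseteq E$ because $c^*$ always defeats $D$, derive the lower bound $\sum_{e\in Q}s(\T^1,e)\geq q\Scover$ from the defeat condition, and match it against a per-voter upper bound whose deficit is $\sum_i \deg_Q(u_i)^2$ (your $a_i$ are exactly the paper's $\deg_Q(u_i)$), which forces all degrees to equal one. The only difference is cosmetic: you evaluate the per-voter contribution in closed form before summing, whereas the paper sums the two block contributions directly.
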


\begin{proof} Let $\T = \T^1 \circ \T^2 \circ \T^3$ be a completion
  where at least $q$ candidates defeat $c^*$, let $Q$ be the $q$
  highest rated candidates in $T$. The candidate $c^*$ always defeats
  all candidates
  in $D$, hence $Q \subseteq E$. We show a lower bound and an upper
  bound on the total score of $Q$ in $\T^1$.

  \mypar{Lower bound.} As
  we already showed, every $e \in Q$ should get
  $s(\T^1, e) \geq \Scover$ in order to defeat $c^*$, therefore
  \begin{align*}
    \sum_{e \in Q} s(\T^1, e) &\geq q \cdot \Scover= 3q \cdot \rpar{2m-1 + \sum_{i=1}^{q-1} (m-i)} \\
    & = 3mq^2 + 3mq - \frac{3q^3}{2} + \frac{3q^2}{2} - 3 q
  \end{align*}
  
  \mypar{Upper bound.} For every $u \in U$ denote by $\deg_Q(u)$ the
  degree of $u$ in the sub-graph induced by $Q$, then
  $\sum_{u \in U} \deg_Q(u) = 3q$. We get that
  \begin{align*}
    \sum_{e \in Q} s(\T^1, e) &\leq \sum_{i=1}^{3q} \rpar{\sum_{j=1}^{\deg_Q(u_i)} (2m-j) + \!\!\!\!\sum_{j=1}^{q-\deg_Q(u_i)}\!\!\!\! (m-j)} \\
    = & 3mq^2 + 3mq - \frac{3q^3}{2} + \frac{3q^2}{2} - \sum_{i=1}^{3q} \deg_Q(u_i)^2 \,.
\end{align*}
Overall, the two bounds imply that
$\sum_{i=1}^{3q} \deg_Q(u_i)^2 \leq 3q$ and this is possible only if
all the degrees are one: When all degrees are one, the sum is exactly
$3q$. If we decrease $\deg_Q(u_i)$ to zero and increase $\deg_Q(u_j)$
to two, then the total sum increases by three. Any further changes
cannot decrease the total sum. Therefore, all degrees in the sub-graphs
induced by $Q$ are one, and $Q$ is an X3C.
\end{proof}

\subsection{Top-k Sets}

We have shown hardness results for the necessary and possible top-$k$
winners. Interestingly, we can retain the tractable cases of the
necessary-winner and possible-winner problems for the variant of the
problem where we are given a set $C'\subseteq C$ of $k$ candidates,
and the goal is to determine whether $C'$ constitutes the exact set of
top-$k$ winners. We say that $C'$ is a \e{necessary top-$k$ set} if
$C'$ is the set of top-$k$ winners in every completion, and a
\e{possible top-$k$ set} if $C'$ is the set of top-$k$ winners in at
least one completion.

\begin{theorem}
\label{thm:topkset}
Let $r$ be a positional scoring rule. We can determine in
time $k \cdot \poly(n,m)$:
\begin{enumerate}
  \item whether a given candidate set is a necessary top-$k$
    set;
    \item whether a given candidate set is a possible
      top-$k$ set, assuming that $r$ is either plurality or veto.
\end{enumerate}
\end{theorem}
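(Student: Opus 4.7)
For Part 1, I plan to reduce the problem to pairwise necessary comparisons. Observe that $C'$ is a necessary top-$k$ set if and only if for every $c \in C'$ and every $c' \in C \setminus C'$, the candidate $c$ outranks $c'$ in $R_{\T}$ for every completion $\T$ of $\P$. There are $k(m-k)$ such pairs. For a single pair, the pairwise necessary condition is equivalent to $\min_{\T}(s(\T, c) - s(\T, c'))$ meeting the threshold dictated by $\tie$: at least $0$ when $c \tie c'$ and at least $1$ otherwise. This minimum decomposes as a sum of per-voter minima because partial orders of different voters are independent, and each per-voter minimum can be computed in polynomial time by the same subroutine that underlies the polynomial-time algorithm for \NW in Theorem~\ref{thm:classification}. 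This yields total time $k(m-k)\cdot\poly(n,m)$.

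For Part 2 under plurality, the plan is to iterate over a witness $(c^*, T)\in C'\times\{0,1,\dots,n\}$, interpreted as ``$c^*$ is the candidate in $C'$ of smallest rank in $R_{\T}$, with plurality score $T = s(\T, c^*)$''. Given the witness, the requirement that $C'$ is exactly the top-$k$ set translates into explicit per-candidate score bounds: $s(\T, c^*) = T$; for $c \in C'\setminus\{c^*\}$, a lower bound of $T$ if $c \tie c^*$ and of $T+1$ otherwise; and for $c' \in C\setminus C'$, an upper bound of $T$ if $c^* \tie c'$ and of $T-1$ otherwise. A plurality completion is determined by choosing, for each voter $i$, a single top vote $x_i$ from the set $M_i$ of candidates maximal in $P_i$, so the existence of a completion satisfying all the bounds reduces to a bipartite flow feasibility problem (voters as unit sources, edges $(i, c)$ whenever $c \in M_i$, candidates as sinks with the derived capacity intervals), which is solvable in polynomial time. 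Return YES iff some witness is feasible. Only $O(kn)$ witnesses are considered, so the total time is $k\cdot\poly(n,m)$.

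For the veto case, I plan to reduce to plurality by reversing every partial order $P_i$ and the tie-breaker $\tie$, in the spirit of Lemma~\ref{lemma:reversing}. Under this reversal, veto scores and the ranking $R_{\T}$ are inverted, so $C'$ is a possible top-$k$ veto set if and only if $C\setminus C'$ is a possible top-$(m-k)$ plurality set, which is decidable by the algorithm above.

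The main obstacle I anticipate is the interaction between the tie-breaker and the boundary of the score cutoff in Part 2, since whether two candidates tie and how ties are broken interact with the chosen score profile. Parameterizing by the sentinel $(c^*, T)$ resolves this cleanly: all tie-breaking decisions reduce to pairwise comparisons with $c^*$ and are absorbed directly into per-candidate bounds, leaving a single bipartite flow feasibility check per witness.
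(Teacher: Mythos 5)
Your proposal is correct and matches the paper's proof in essentials: Part~1 is the same reduction to pairwise necessary comparisons between members of $C'$ and non-members, decided by the per-voter score-margin computation underlying the \NW algorithm, and Part~2 is the same enumeration of a sentinel candidate $c^*\in C'$ together with its exact score, followed by a degree-constrained bipartite matching feasibility test, which is precisely the paper's Lemma~\ref{lemma:committeeMatching}. The only divergence is that the paper handles veto directly inside that matching lemma (a veto score of $s$ means being placed last by exactly $n-s$ voters, so the bounds are complemented), whereas you reduce veto to plurality by reversing the orders and the tie-breaker as in Lemma~\ref{lemma:reversing}; both routes work, and the one wording slip is that your $c^*$ is the \emph{lowest}-ranked member of $C'$ (the one at rank $k$), not the one of ``smallest rank,'' though your stated score constraints already reflect the correct intent.
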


\ifLong
\begin{proof}
For the first part of Theorem~\ref{thm:topkset} (necessity), we only
need to determine whether a candidate outside of $C'$ can defeat a
candidate from $C'$. This can be done using the algorithm of Xia and
Conitzer~\shortcite{DBLP:journals/jair/XiaC11}, with a minor adjustment to
account for tie breaking.

For the second part (possibility), for every candidate $c' \in C'$ and every integer score $0\leq s\leq n$ we use Lemma~\ref{lemma:committeeMatching} (that we prove in the following
section) to check if there exists a completion $\T$ which satisfies the following conditions. First, $s(\T, c') = s$. Second, for every $c \in C' \setminus \set{c'}$, if $c \tie c'$ then $s(\T, c) \geq s$, otherwise $s(\T, c) > s$. This means that all candidates in $C' \setminus \set{c'}$ defeat $c'$. Finally, For every $c \in C \setminus C'$, if $c' \tie c$ then $s(\T, c) \leq s$, otherwise $s(\T, c) < s$. This means that $c'$ defeats all candidates in $C \setminus C'$. $C'$ is a possible top-$k$ set if and only if such completion exists for some candidate $c' \in C'$ and score $s$.
\end{proof}
\else
For the first part of Theorem~\ref{thm:topkset} (necessity), we only
need to determine whether a candidate outside of $C'$ can defeat a
candidate from $C'$. This can be done using the algorithm of Xia and
Conitzer~\shortcite{DBLP:journals/jair/XiaC11}, with a minor adjustment to
account for tie breaking. For the second part (possibility), we use
Lemma~\ref{lemma:committeeMatching} that we prove in the following
section.
\fi

Interestingly, for the plurality and veto rules, a set
$C'$ of candidates is a top-$k$ set for at least one
tie-breaking order if and only if $C'$ is a \e{Condorcet
  committee}~\cite{10.2307/2101455}. Then, by a simple adjustment of
the proof of Theorem~\ref{thm:topkset} we can conclude that, in the
case of plurality and veto, one can determine in polynomial time
whether a given candidate set is a necessary or possible Condorcet
committee.

\eat{
Given a positional scoring rule $r$, a set of candidates $C$, a voting
profile $\T$ and $\tie$, the \e{$k$-winning set} is
$\set{c \in C : \rank(\T,c) \in [k]}$ (the $k$ highest ranked
candidates). Given a set $D \subseteq C$ of $k$ candidates, we can ask
whether $D$ is a \e{necessary $k$-winning set} (\NWS) and whether $D$
is a \e{possible $k$-winning set} (\PWS).

\begin{theorem}
\label{thm:NecCom}
For every positional scoring rule, \NWS can be solved in $k \cdot \poly(n,m)$.
\end{theorem}
\begin{proof}
Given $D \subseteq C$ of size $k$ we search for a counterexample, that is, a pair of candidates $d \in D, c \in C \setminus D$ and a completion $\T$ for which $\rank(\T, c) < \rank(\T,d)$. That task of determining whether a candidate $c$ can defeat another candidate $d$ was solved in $\poly(n,m)$ as part of the algorithm from \cite{DBLP:journals/jair/XiaC11} for \NW with respect to any positional scoring rule. The algorithm constructs a completion where the scores of $c, d$ are $S(c), S(d)$, respectively. For the context of \NW, the last step of the algorithm was to check whether $S(c) \geq S(d)$. Since we deal with tie breakers, this condition should be changed to $(S(c) > S(d)) \lor (S(c) = S(d) \land c \tie d)$.
\end{proof}

\begin{theorem}
For plurality and veto, \PWS can be solved in $k \cdot \poly(n,m)$.
\end{theorem}
\begin{proof}
Given $D \subseteq C$ of size $k$, for every candidate $d \in D$ and score $s \leq n$ we use Lemma~\ref{lemma:committeeMatching} to check if there exists a completion $\T$ which satisfies the following conditions. First, $s(\T, d) = s$. Second, for every $d' \in D \setminus \set{d}$, if $d' \tie d$ then $s(\T, d') \geq s$, otherwise $s(\T, d') > s$. This means that $\rank(\T, w') < \rank(\T, w)$. Finally, For every $c \in C \setminus D$, if $d \tie c$ then $s(\T, c) \leq s$, otherwise $s(\T, c) < s$. This means that $\rank(\T, d) < \rank(\T, c)$. If such completion exists then $D$ is a possible $k$-winning set.
\end{proof}
}
\section{The Case of a Fixed k}
In the previous section, we established that the problems of finding
the necessary and possible top-$k$ winners are very often intractable.
In this section, we investigate the complexity of these problems under
the assumption that $k$ is fixed (and, in particular, can be the degree
of the polynomial that bounds the running time). We will show that the
complexity picture for \NTWk and \PTWk is way more positive, as we
generalize the tractability of \e{almost} all of the tractable scoring
rules for \NW and \PW. We will also generalize hardness results from
\PW to \PTWk; interestingly, this generalization turns out to be quite
nontrivial. 

\subsection{Tractabiliy of \NTWk}
We first prove that \NTWk is tractable for every positional scoring
rule (pure or not), as long as the scores are bounded by a polynomial
in the number $m$ of candidates; in this case, we say that the rule
has \e{polynomial scores}. Note that this assumption is in addition to
our usual assumption that the scores can be computed in polynomial
time. 

\begin{theorem} \label{thm:PolyRuleNecMemk} For all fixed $k$ and
  positional scoring rules $r$ with polynomial scores, \NTWk is 
  in polynomial time.
\end{theorem}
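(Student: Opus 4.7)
The plan is to show $\NTWk$ is in polynomial time by combining an exhaustive search over $k$-element candidate subsets with a dynamic program on voters. The starting observation is that $c$ is \emph{not} a necessary top-$k$ winner if and only if there exists a set $D \subseteq C \setminus \set{c}$ with $|D|=k$ and a completion $\T$ of $\P$ such that every $d \in D$ defeats $c$ in $R_{\T}$, that is, either $s(\T, d) > s(\T, c)$, or $s(\T, d) = s(\T, c)$ and $d \tie c$. Since $k$ is fixed, there are only $\binom{m}{k} = O(m^k)$ such sets, and we can afford to iterate over all of them; for each $D$, the companion question becomes whether some completion of $\P$ makes every $d \in D$ defeat $c$.

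For a fixed $D = \set{d_1, \dots, d_k}$, I would run a dynamic program over voters whose state is the $(k+1)$-tuple $(s_c, s_{d_1}, \dots, s_{d_k})$ of partial scores accumulated so far by the tracked candidates. Because $r$ has polynomial scores, each coordinate lies in $\set{0, 1, \dots, n \cdot r(m,1)}$, which is $\poly(n,m)$, so the full state space has size $\poly(n,m)^{k+1} = \poly(n,m)$ for fixed $k$. The transition at voter $v_i$ adds one of the feasible per-voter contribution vectors $(r(m,p_c), r(m,p_{d_1}), \dots, r(m,p_{d_k}))$, where $(p_c, p_{d_1}, \dots, p_{d_k}) \in [m]^{k+1}$ is a tuple of distinct positions realizable by some linear extension of $P_i$; enumerating these tuples costs $O(m^{k+1})$, polynomial for fixed $k$. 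After all voters are processed, the algorithm accepts if and only if some reachable state $(s_c, s_{d_1}, \dots, s_{d_k})$ satisfies $s_{d_j} > s_c$, or $s_{d_j} = s_c$ with $d_j \tie c$, for every $j \in [k]$.

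The main subtlety, and the part I expect to be the principal obstacle, is deciding in polynomial time, given $P_i$ and a position tuple $(p_c, p_{d_1}, \dots, p_{d_k})$, whether the tuple is realized by \emph{some} linear extension of $P_i$. I would handle this by first checking that the relative order on $\set{c} \cup D$ induced by the positions extends $P_i$ restricted to $\set{c} \cup D$, and then, for each remaining candidate $y \in C \setminus (\set{c} \cup D)$, determining which of the $k+2$ slots between the fixed positions are consistent with $P_i$'s relations between $y$ and $\set{c} \cup D$ (these form an interval of allowed slots for $y$). The remaining task is to decide whether the $m-k-1$ leftover candidates can be packed into these slots with the prescribed slot sizes, respecting both the per-candidate slot restrictions and the partial order $P_i$ restricted to the leftover candidates. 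This is a precedence-constrained packing that is solvable in polynomial time, for example by a greedy sweep over a topological order of $P_i$ or, equivalently, by a Hall-type check on a bipartite graph between leftover candidates and leftover positions. Combining the $O(m^k)$ outer loop, the $\poly(n,m)$ state space, and the polynomial feasibility check per transition yields a polynomial-time algorithm for $\NTWk$.
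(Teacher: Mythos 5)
Your algorithm has the same architecture as the paper's proof: guess the $k$ candidates $D$ that would defeat $c$ (an $O(m^k)$ enumeration, polynomial for fixed $k$), run a dynamic program over voters whose states are the accumulated score tuples of $\set{c}\cup D$ (polynomially many states because the scores are polynomial and $k$ is fixed), and reduce everything to a single-voter feasibility question. The paper phrases the per-voter step as: given one partial order $P_i$ and a target score tuple, decide whether some linear extension realizes it; your position-tuple enumeration is a slightly finer-grained but equivalent formulation.

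The one place where your argument as written would fail is the justification of the single-voter realizability check. A ``Hall-type check on a bipartite graph between leftover candidates and leftover positions'' only enforces the per-candidate interval constraints and the slot capacities; it is blind to the precedence constraints of $P_i$ \emph{among the leftover candidates themselves}. Two leftover candidates $y \prec z$ can each have a feasible position under a perfect matching while the matching places $z$ above $y$, and Hall's condition cannot rule this out; likewise a greedy sweep over a topological order does not obviously succeed without first propagating deadlines through the precedence relation. The subproblem you have isolated is exactly scheduling unit-time tasks on one machine with release times, deadlines, and precedence constraints, and the paper closes this step by citing that this problem is solvable in polynomial time (Garey, Johnson, Simons, and Tarjan, 1981): each candidate is a unit task, each tracked candidate gets release time and deadline pinning it to the positions achieving its target score, the rest get the trivial window $[1,m+1]$, and $P_i$ supplies the precedence constraints. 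Substituting that citation for your packing argument makes your proof complete and essentially identical to the paper's.
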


Note that all of the specific rules mentioned so far (i.e.,
$t$-approval, $t$-veto, Borda and so on) have polynomial scores, and
hence, are covered by Theorem~\ref{thm:PolyRuleNecMemk}. An example of
a rule that is \e{not} covered is the rule defined by $r(m,j)=2^{m-j}$.

In the remainder of this section, we prove
Theorem~\ref{thm:PolyRuleNecMemk}. To determine whether a candidate
$c$ is a necessary top-$k$ winner, we search for a counterexample,
that is, $k$ candidates that defeat $c$ in some completion. For that,
we iterate over every subset
$\set{c_1, \dots, c_k} \subseteq C \setminus \set{c}$ and determine
whether these $k+1$ candidates can get a combination of scores that
constitues the counterexample.

More formally, let $C$ be a set of candidates and $r$ a positional
scoring rule. For a partial profile $\P = (P_1, \dots, P_n)$ and a
sequence $S=(c_1,\dots,c_q)$ of candidates from $C$, we denote by
$\ps(\P,S)$ the set of all possible scores that the candidates in $S$
can obtain jointly in a completion:
$$ \ps(\P,S) \eqdef \set{(s(\T,c_1), \dots, s(\T,c_q)) : \T \text{ completes } \P} $$
Note that $\ps(\P,S) \subseteq\set{0,\dots,n \cdot \Vec{s}_m(1)}^q$.
When $\P$ consists of a single voter $P$, we
write $\ps(P,S)$ instead of $\ps(\P,S)$.

A counterexample for $c$ being a necessary top-$k$ winner is a
sequence $S=(c_1,\dots,c_q)$ where $q=k+1$ and $c_q=c$, and a sequence
$(s_1,\dots,s_q)\in\ps(\P,S)$ such that each $c_i$ beats $c$ when for
$i=1,\dots,k$ the score of $c_i$ is $s(c_i)=s_i$ and $s(c)=s_q$. The following two
lemmas show that, indeed, we can find such a counterexample in
polynomial time.

\begin{lemma}
  \label{lemma:scheduling}
  Let $q$ be a fixed number and $r$ a positional scoring rule.
  Whether $(s_1, \dots, s_q) \in \ps(P,S)$ can be determined in
  polynomial time, given a partial order $P$ over a set of candidates,
  a sequence $S$ of $q$ candidates, and scores
  $s_1,\dots, s_q$.
\end{lemma}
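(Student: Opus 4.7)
The plan is to enumerate all possible position assignments for the $q$ candidates in $S$ that are consistent with the target scores, and then, for each such assignment, decide whether some linear extension of $P$ realizes it. Since $q$ is fixed, the outer enumeration is of polynomial size, and the inner feasibility test reduces to a bipartite matching computation.

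First, for each $i \in [q]$, I would compute the set $L_i = \set{j \in [m] : r(m,j) = s_i}$ of positions whose score under $r$ equals $s_i$; this set is computable in polynomial time by the standing assumption on $r$. If some $L_i$ is empty, reject. Otherwise, iterate over all tuples $(p_1, \dots, p_q) \in L_1 \times \cdots \times L_q$ whose coordinates are pairwise distinct; there are at most $m^q$ such tuples, which is polynomial for fixed $q$. For each tuple, I would first discard it whenever it conflicts with $P$ restricted to $\set{c_1, \dots, c_q}$, namely when $c_i \succ_P c_j$ but $p_i \geq p_j$: no linear extension of $P$ can then realize the prescribed positions.

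For each surviving tuple, it remains to decide whether $P$ admits a linear extension that places $c_i$ at position $p_i$ for every $i$. To this end, for every unfixed candidate $x$, I would compute a tight feasible range $[\ell_x, u_x]$ by propagating $P$'s constraints through its transitive closure together with the fixed positions: $\ell_x$ must be at least $|\set{y : y \succ_P x}|+1$, and at least $p_i+1$ for every fixed $c_i$ with $c_i \succ_P x$; the definition of $u_x$ is symmetric. The algorithm then tests, in polynomial time, whether the bipartite graph that connects each unfixed $x$ to the positions in $[\ell_x, u_x] \setminus \set{p_1, \dots, p_q}$ admits a perfect matching. We accept if some tuple yields such a matching, and reject otherwise.

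The main obstacle is verifying the soundness of the matching-based feasibility check: a perfect matching in the bipartite graph does not a priori enforce the $P$-constraints \e{among the unfixed candidates}, while a genuine linear extension must respect them. The key technical claim is that, once the intervals $[\ell_x, u_x]$ are computed tightly via propagation through $P$, any perfect matching can be turned into a valid linear extension by scanning positions in increasing order and placing, at each step, a $P$-minimal candidate among those whose tight interval contains the current position. A minimality argument---in the same spirit as the feasibility analysis underlying the algorithm of Xia and Conitzer~\shortcite{DBLP:journals/jair/XiaC11} for the necessary-winner problem---shows that this greedy construction never gets stuck, and comprises the bulk of the technical work.
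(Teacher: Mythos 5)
Your argument is correct, but it takes a genuinely different route from the paper. The paper disposes of this lemma in a few lines by reducing directly to single-machine scheduling of unit-time tasks with release times, deadlines, and precedence constraints, which is known to be polynomial-time solvable~\cite{DBLP:journals/siamcomp/GareyJST81}: each candidate is a unit task, the window of positions attaining score $s_i$ (a contiguous block, since the scoring vector is non-increasing) supplies the release time and deadline of $c_i$, and $P$ supplies the precedences. You instead enumerate the at most $m^q$ position tuples for $S$ and, for each, decide feasibility of a position-constrained linear extension via interval propagation plus bipartite matching. This is sound: your bounds are monotone along $P$ (if $x$ precedes $y$ then $\ell_x \leq \ell_y$ and $u_x \leq u_y$, and the fixed-position bounds transfer to successors/predecessors by transitivity), so an exchange argument converts any interval-respecting matching into a genuine linear extension, and completeness is immediate since the bounds are necessary conditions. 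But note that this ``key technical claim'' you defer is precisely the content of the cited scheduling result, so you are re-proving the one ingredient the paper outsources; if you are going to write it out, the minimality/greedy argument must be given in full, as it is the entire substance of the proof. Two further comparative remarks: your outer enumeration is the only place fixed $q$ is used, and it is avoidable---since each $L_i$ is itself an interval of positions, you could put the candidates of $S$ directly into the matching with intervals $L_i$ (or, as the paper does, into the scheduler with those release times and deadlines), which yields the lemma for arbitrary $q$; and your method, unlike the paper's one-line reduction, does not need $L_i$ to be contiguous, so it is marginally more robust but strictly more work here.
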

\begin{proof}
  We use a reduction to a scheduling problem where tasks have
  \e{execution times}, \e{release times}, \e{deadlines}, and
  \e{precedence constraints} (i.e., task $x$ should be completed
  before starting task $y$).  This scheduling problem can be solved in
  polynomial time~\cite{DBLP:journals/siamcomp/GareyJST81}.  In the
  reduction, each candidate $c$ is a task with a unit execution time.
  For every $c_i$ in $S$, the release time is $\min \set{j \in [n] : r(m,j) = s_i}$,
  and the deadline is
  $1+\max \set{j \in [n] : r(m,j) = s_i}$. For the rest of the candidates, the release time is 1 and the deadline is $m+1$. The precedence constraints are
  $P$. It holds that $(s_1, \dots, s_q) \in \ps(P,S)$ if and only if the
  tasks can be scheduled according to all the requirements.
\end{proof}

From Lemma~\ref{lemma:scheduling} we can conclude that when $q$ is
fixed and $r$ has polynomial scores, we can construct $\ps(\P,S)$ in
polynomial time, via straightforward dynamic
programming.
\begin{lemma}
  \label{lemma:dynamic}
  Let $q$ be a fixed natural number and $r$ a positional scoring rule
  with polynomial scores. The set $\ps(\P,S)$ can be constructed in
  polynomial time, given a partial profile $\P$ and a sequence $S$ of
  $q$ candidates.
\end{lemma}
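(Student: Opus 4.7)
The plan is to combine Lemma~\ref{lemma:scheduling} with a dynamic program over the voters, exploiting the fact that when $q$ is fixed and the scores are polynomial, $\ps(\P,S)$ itself is of polynomial size. Concretely, the total score any candidate can accumulate is at most $n\cdot r(m,1) = \poly(n,m)$, so $\ps(\P,S) \subseteq \set{0,\dots,n\cdot r(m,1)}^q$ contains at most $(n\cdot r(m,1)+1)^q = \poly(n,m)$ tuples since $q$ is a constant. So a polynomial-time enumeration is at least not ruled out by output size.

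First, for each individual voter $P_i$, I would construct $\ps(P_i,S)$ explicitly by iterating over all $(r(m,1)+1)^q = \poly(m)$ candidate score tuples in $\set{0,\dots,r(m,1)}^q$ and testing membership via Lemma~\ref{lemma:scheduling}. Each membership check runs in polynomial time and there are only polynomially many tuples, so every per-voter set $\ps(P_i,S)$ is built in polynomial time.

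Next, I would combine these per-voter sets by induction on the number of voters. Since voters are completed independently and scores are additive over voters, a tuple $\vec{s}$ lies in $\ps((P_1,\dots,P_{i+1}),S)$ if and only if $\vec{s}=\vec{u}+\vec{v}$ for some $\vec{u}\in \ps((P_1,\dots,P_{i}),S)$ and $\vec{v}\in \ps(P_{i+1},S)$. The DP table at stage $i$ is exactly $\ps((P_1,\dots,P_i),S)$, which has size at most $(i\cdot r(m,1)+1)^q = \poly(n,m)$; advancing to stage $i+1$ therefore takes polynomial time, and after $n$ stages we obtain $\ps(\P,S)$.

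I do not anticipate a genuine obstacle: the procedure is essentially bookkeeping once Lemma~\ref{lemma:scheduling} is in hand. The one point to verify carefully is that the intermediate DP tables truly remain polynomially bounded, which relies on both hypotheses of the lemma (fixed $q$ and polynomial scores) — if either were dropped, the output itself could blow up, and this strategy would collapse.
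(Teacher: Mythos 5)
Your proposal is correct and matches the paper's own proof essentially verbatim: per-voter sets $\ps(P_i,S)$ are built by enumerating all polynomially many score tuples and testing each with Lemma~\ref{lemma:scheduling}, and the full set is then assembled by the same pointwise-sum dynamic program over voters. Your extra remarks on why the intermediate tables stay polynomially bounded (fixed $q$ plus polynomial scores) are exactly the implicit justification the paper relies on.
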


\ifLong
\begin{proof}
First, for every $i \in [n]$, construct $\ps (P_i, S)$ using Lemma~\ref{lemma:scheduling}. Then, given  $\ps ((P_1, \dots, P_i), S)$, observe that
\begin{align*}
    & \ps ((P_1, \dots, P_{i+1}), S) \\
    & = \set{ \Vec{u} + \Vec{w} : \Vec{u} \in \ps ((P_1, \dots, P_i), S), \Vec{w} \in \ps (P_{i+1}, S)}
\end{align*}
where $\Vec{u} + \Vec{w}$ is a point-wise sum of the two vectors $(\Vec{u} + \Vec{w})(j) = \Vec{u}(j)+ \Vec{w}(j)$. Hence, $\ps(\P,S)$ can be constructed via straightforward dynamic programming.
\end{proof}
\fi

\subsection{Complexity of  \PTWk}
%Next, we discuss the complexity of recognizing the possible top-$k$
%winners.

\subsubsection{Plurality and Veto}
We first show that the positional scoring rules that are tractable for
\PW, namely plurality and veto, are also tractable for \PTWk.  This is
done by a reduction to the problem of \e{polygamous
  matching}~\cite{DBLP:conf/pods/KimelfeldKT19}: Given a bipartite
graph $G = (U \cup W, E)$ and natural numbers $\alpha_w \leq \beta_w$
for all $w \in W$, determine whether there is a subset of $E$ where
each $u \in U$ is incident to exactly one edge and every $w \in W$ is
incident to at least $\alpha_w$ edges and at most $\beta_w$ edges.
This problem is known to be solvable in polynomial
time~\cite{DBLP:journals/ipl/Shiloach81,DBLP:conf/aussois/EdmondsJ01}.

\begin{lemma}
\label{lemma:committeeMatching}
The following decision problem can be solved in polynomial time for
the plurality and veto rules: given a partial profile $\P$ over a set
$C$ of candidates and numbers $\gamma_c \leq \delta_c$ for every
candidate $c$, is there a completion $\T$ such that
$\gamma_c \leq s(\T, c) \leq \delta_c$ for every $c \in C$?
\end{lemma}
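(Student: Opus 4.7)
The plan is to reduce to polygamous matching in both cases. I will treat plurality first and veto by a symmetric argument.

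For plurality, observe that in any completion $\T$, the score $s(\T,c)$ equals the number of voters that rank $c$ at the top. Moreover, a candidate $c$ can be placed at the top of some linear extension of a partial order $P_i$ if and only if $c$ is a \emph{maximal element} of $P_i$, and once $c$ is fixed at the top, the remaining partial order on $C \setminus \{c\}$ can always be extended (arbitrarily) to complete $T_i$. I therefore build a bipartite graph $G = (U \cup W, E)$ with $U = \{v_1,\dots,v_n\}$ on one side and $W = C$ on the other, and place an edge $(v_i, c) \in E$ exactly when $c$ is maximal in $P_i$. Setting $\alpha_c \eqdef \gamma_c$ and $\beta_c \eqdef \delta_c$ for every candidate $c \in W$, a polygamous matching in which each $v_i$ has degree exactly one and each $c$ has degree in $[\alpha_c,\beta_c]$ corresponds bijectively to the choice of top candidate per voter, and the degree of $c$ in the matching equals $s(\T,c)$ for the corresponding completion. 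Applying the polynomial-time algorithm for polygamous matching gives the result.

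For veto, the score $s(\T,c)$ equals $n$ minus the number of voters that rank $c$ at the bottom, and $c$ can be placed at the bottom of a completion of $P_i$ if and only if $c$ is a \emph{minimal element} of $P_i$ (again, the rest can always be extended). So I use the same reduction with edges $(v_i,c)$ whenever $c$ is minimal in $P_i$, each $v_i$ matched to exactly one edge, but now the degree of $c$ in the matching equals the number of voters vetoing $c$. The bounds $\gamma_c \leq s(\T,c) \leq \delta_c$ translate to $\max(0,\, n - \delta_c) \leq \deg_c \leq n - \gamma_c$; I set $\alpha_c \eqdef \max(0,\, n-\delta_c)$ and $\beta_c \eqdef n-\gamma_c$ (and reject outright if $\gamma_c > n$ or $\delta_c < 0$). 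Again, polygamous matching solves it.

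The main thing to verify carefully, and where the argument could slip, is the bijective correspondence between matchings and completions: specifically, that choosing a maximal (respectively minimal) element per voter is sufficient to reconstruct a valid completion. This follows because once the top (respectively bottom) position is fixed to a candidate that is maximal (respectively minimal) in $P_i$, the induced partial order on the remaining candidates is still a partial order, and any linear extension of it produces a linear extension of $P_i$; in plurality and veto the scores of the other candidates are oblivious to this extension (all contribute $0$ in plurality, or all contribute $1$ in veto), so the matching alone determines the full score vector. This is the only place where the special structure of plurality and veto is used, and it is precisely why the reduction would break for general positional scoring rules.
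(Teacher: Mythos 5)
Your proof is correct and follows essentially the same route as the paper's: both reduce to polygamous matching with voters on one side, candidates on the other, edges given by which candidates can occupy the top (plurality) or bottom (veto) position of a completion of $P_i$, and degree bounds $[\gamma_c,\delta_c]$ or $[n-\delta_c,\, n-\gamma_c]$ respectively. Your added verification of the matching--completion correspondence and the clamping of bounds are harmless refinements of the same argument.
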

\begin{proof}
  For both rules, we apply a reduction to polygamous matching, where
  $U=V$ (the set of voters) and $W = C$. For plurality, $E$ connects
  $v_i \in V$ and $c \in C$ whenever $c$ can be in the top position in
  one or more completions of $P_i$, and the bounds are
  $\alpha_c=\gamma_c$ and $\beta_c=\delta_c$. For veto, receiving a
  score $s$ is equivalent to being placed in the bottom position of
  $n-s$ voters, so $E$ connects $v_i \in V$ and $c \in C$ whenever $c$
  can be in the bottom position in one or more completions of $P_i$.
  The bounds are $\alpha_c = n-\delta_c$ and $\beta_c = n-\gamma_c$.
\end{proof}

Finally, to solve \PTWk given $C$, $\P$, $\tie$ and $c$, we consider
every set $D \subseteq C \setminus \set{c}$ of size $m-k$ and search
for a completion where $c$ defeats all candidates of $D$. For that, we
iterate over every integer score $0\leq s\leq n$ and use
Lemma~\ref{lemma:committeeMatching} to test whether there exists a
completion $\T$ such that $s(\T,c) \geq s$, and for every $d \in D$ we
have $s(\T,d) \leq s$ if $c \tie d$ or $s(\T,d) < s$ otherwise. Hence,
we conclude that:
\begin{theorem}\label{thm:ptwk-fixed-possible-veto-plurality}
  For every fixed $k$, \PTWk can be solved in polynomial under the
  plurality and veto rules.
\end{theorem}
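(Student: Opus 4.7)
The plan is to enumerate the at most polynomially many ``shapes'' a witnessing completion can take and then delegate the numeric feasibility check to Lemma~\ref{lemma:committeeMatching}. A candidate $c$ is a top-$k$ winner in a completion $\T$ precisely when it defeats, under $R_{\T}$, at least $m-k$ of the remaining $m-1$ candidates. For fixed $k$ there are only $\binom{m-1}{m-k} = \binom{m-1}{k-1} = \poly(m)$ subsets $D \subseteq C \setminus \set{c}$ of size $m-k$, so we can afford to iterate over every such $D$, interpreting it as the set of candidates that $c$ is intended to defeat.

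Having fixed $D$, I would then iterate over all potential target scores $s \in \set{0, \dots, n}$ for $c$; this is sufficient because under plurality and veto every candidate's score lies in this integer range. For each pair $(D, s)$, I would invoke Lemma~\ref{lemma:committeeMatching} with per-candidate bounds set as follows: $\gamma_c = s$ and $\delta_c = n$; for each $d \in D$ set $\gamma_d = 0$, with $\delta_d = s$ if $c \tie d$ and $\delta_d = s-1$ otherwise (so that $c$ beats $d$ either by score or by winning the tie); and for every remaining candidate use the trivial bounds $0$ and $n$. A direct check using the definition of $R_{\T}$ shows that a completion realizing $c$ as a top-$k$ winner exists if and only if Lemma~\ref{lemma:committeeMatching} returns yes for at least one such pair $(D,s)$.

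The overall running time is $\binom{m-1}{k-1} \cdot (n+1) \cdot \poly(n,m) = \poly(n,m)$ for fixed $k$. The genuinely hard combinatorial work, namely simultaneously enforcing both lower and upper bounds on every candidate's score via the reduction to polygamous matching, has already been packaged in Lemma~\ref{lemma:committeeMatching}; what is left here is mostly bookkeeping. The main subtlety I expect is the careful translation of the tie-breaker $\tie$ into strict versus non-strict inequalities on the bounds $\delta_d$, together with the observation that fixing $k$ turns the exponential choice of $D$ into a polynomial one.
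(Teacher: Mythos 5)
Your proposal is correct and follows essentially the same route as the paper: enumerate the $\binom{m-1}{m-k}=\poly(m)$ candidate sets $D$ of size $m-k$ that $c$ should defeat, iterate over the target score $s\in\set{0,\dots,n}$ for $c$, and delegate the feasibility check (with the tie-breaker translated into strict versus non-strict score bounds) to Lemma~\ref{lemma:committeeMatching}. The only cosmetic difference is that you spell out the explicit $\gamma,\delta$ bounds passed to the lemma, which the paper leaves implicit.
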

The polynomial  degree in
Theorem~\ref{thm:ptwk-fixed-possible-veto-plurality} depends on
$k$. This is unavoidable, at least for the plurality rule, under
conventional assumptions in parameterized complexity.
This is shown by the proof of
Theorem~\ref{thm:PluralityPosNecMem} that gives 
an FPT reduction from the dominating-set problem, which is
$\mathrm{W}[2]$-hard, to $\PTW$.
\begin{theorem}
  \label{thm:plurality-w2-hard}
  Under the plurality rule, 
$\PTW$ is $\mathrm{W}[2]$-hard for the parameter $k$.
\end{theorem}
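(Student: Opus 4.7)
The plan is to verify that the polynomial-time reduction from dominating set to \PTW given in the proof of Theorem~\ref{thm:PluralityPosNecMem} is in fact an FPT reduction with respect to the parameter $k$. Since dominating set parameterized by the solution size $k$ is a canonical $\mathrm{W}[2]$-complete problem, this will immediately yield the claimed $\mathrm{W}[2]$-hardness of \PTW for the parameter $k$ under plurality.

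Recall the reduction: given an instance $(G, k)$ of dominating set with $G = (U, E)$ and $U = \set{u_1, \dots, u_n}$, we construct a candidate set $C = U \cup \set{c^*}$, a tie-breaker $\tie = O(\set{c^*}, U)$, and, for each vertex $u_i$, a partial order $P_i = P(N(u_i)^*, U \setminus N(u_i)^*, \set{c^*})$, and then ask whether $c^*$ is a possible top-$k$ winner. The first step of the plan is to observe that (i) the size of the constructed \PTW instance is polynomial in $|U|$, and (ii) the target parameter $k$ in the \PTW instance is exactly the parameter $k$ of the input dominating-set instance. In particular, $k$ only appears as the top-$k$ threshold of the produced \PTW instance, and nowhere in the construction of $C$, $\P$, or $\tie$. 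Hence the parameter map is the identity and the construction runs in polynomial (hence FPT) time, fulfilling the requirements of an FPT reduction.

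The second step is simply to invoke the standard fact that dominating set, parameterized by the solution size, is $\mathrm{W}[2]$-hard. Combined with the FPT reduction from the first step and the correctness argument of Theorem~\ref{thm:PluralityPosNecMem} (which establishes that $G$ has a dominating set of size $k$ iff $c^*$ is a possible top-$k$ winner), this yields $\mathrm{W}[2]$-hardness of \PTW under plurality with parameter $k$.

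I do not expect any real obstacle: the correctness of the reduction is already proved, and all that remains is the routine bookkeeping of checking that the reduction is parameter-preserving. The only mild point worth stating explicitly is that the construction of the \PTW instance does not secretly depend on $k$ (e.g., by adding $k$-many dummy candidates or voters), so that the FPT-reduction criterion is satisfied in the strongest possible sense, namely that the reduction is polynomial-time and the parameter is mapped by the identity.
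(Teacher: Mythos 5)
Your proposal is correct and matches the paper's argument exactly: the paper likewise establishes this theorem by observing that the dominating-set reduction in the proof of Theorem~\ref{thm:PluralityPosNecMem} is an FPT reduction (polynomial time, parameter mapped identically to the top-$k$ threshold) from a $\mathrm{W}[2]$-hard problem. Nothing further is needed.
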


%\begin{proof}
%  For every set $D \subseteq C \setminus \set{c}$ of size $m-k$ we
%  search for a completion where $c$ defeats all candidates of
%  $D$. For this goal, for every score
%  $s%\leq n$ we use Lemma~\ref{lemma:committeeMatching} to check if there exists a completion $\T$ such that $s(\T,c) \geq s$, for every $d \in D$ if $c \tie d$ then %$s(\T,d) \leq s$, otherwise $s(\T,d) < s$.
% \end{proof}

\subsubsection{Beyond Plurality and Veto}
The Classification Theorem (Theorem~\ref{thm:classification}) states
that \PW is intractable for every pure scoring rule other than
plurality or veto. While this hardness easily generalizes to \PTWk
for $k=1$, it is not at all clear how to generalize it to any
$k>1$. In particular, we cannot see how to reduce \PW to \PTWk while
assuming only the purity of the rule.  We can, however, show such a
reduction under a stronger notion of purity.

A rule $r$ is \e{strongly pure} if the score sequence for $m+1$
candidates is obtained from the score sequence for $m$ candidates by
inserting a new score, \e{either to beginning or the end of the
  sequence}. More formally, $r=\set{ \Vec{s}_m }_{m \in \natural^+}$ is strongly pure
if for all $m \geq 1$, either
$\Vec{s}_{m+1} = \Vec{s}_{m+1}(1) \circ \Vec{s}_m$ or
$\Vec{s}_{m+1} = \Vec{s}_m \circ \Vec{s}_{m+1}(m+1)$.  Note that
$t$-approval, $t$-veto and Borda are all strongly pure.

\begin{figure}[t]
  \small
  \scalebox{0.87}{
  \begin{tabular}{|l|l|c|l|c|l|c|l|}\hline
    scores &  $M_{1,1}$ & $\cdots$ & $M_{1,m'}$ & {\scriptsize $\cdots$}  & $M_{k-1,1}$ & $\cdots$ & $M_{k-1,m'}$\\ \hline
    $r'(1)$ &  $d_1$ &&  $d_1$ && $d_{k-1}$ && $d_{k-1}$\\
    $r'(2)$ & $d_2$ && $d_2$ && $d_1$ && $d_1$\\
    $\vdots$ & $\vdots$ && $\vdots$ &&  $\vdots$ &&  $\vdots$\\ 
    $r'(k-1)$ & $d_{k-1}$ && $d_{k-1}$ && $d_{k-2}$ && $d_{k-2}$\\
    $r'(k)$ & $c_1$ && $c_{m'}$ && $c_1$ && $c_{m'}$\\
    $r'(k+1)$ & $c_2$ && $c_1$ && $c_2$ && $c_1$\\
    $\vdots$ & $\vdots$ && $\vdots$&& $\vdots$ && $\vdots$\\
    $r'(m')$ & $c_{m'}$ && $c_{m'-1}$ && $c_{m'}$ && $c_{m'-1}$\\\hline
  \end{tabular}}
\caption{The voters $M_{i,j}$ used in the proof of
  Theorem~\ref{thm:PolyRulePosMemk} with $r'(j)$ as a shorthand
  notation for $r(m',j)$}
  \end{figure}

\begin{theorem}
  \label{thm:PolyRulePosMemk}
  Suppose that the positional scoring rule is strongly pure, has
  polynomial scores, and is neither plurality nor veto. Then \PTWk
  is NP-complete for all fixed $k$.
\end{theorem}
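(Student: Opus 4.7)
The plan is to establish NP-hardness by reducing from \PW for the same rule $r$, which is NP-hard by Theorem~\ref{thm:classification}. Membership in NP is routine: guess a completion and verify that the rank of $c$ is at most $k$. First I would reduce \PW to \PTW with $k = 1$ by choosing a tie breaker that places the distinguished candidate $c$ at the top, so that $c$ is a possible winner iff $c$ is a possible top-$1$ winner. For $k \geq 2$, I would then reduce the $k = 1$ case to \PTWk by padding the instance with $k - 1$ new ``dummy'' candidates forced into the top $k - 1$ ranks of every completion.

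Given a top-$1$ instance with $m_0$ original candidates, partial profile $\P_0$, distinguished candidate $c$, and tie breaker, strong purity gives a decomposition $\Vec{s}_{m_0 + k - 1} = (t_1, \dots, t_a) \circ \Vec{s}_{m_0} \circ (b_1, \dots, b_b)$ with $a + b = k - 1$. I would introduce new candidates $d_1, \dots, d_{k-1}$, force $d_1, \dots, d_a$ strictly above every original candidate in each $P_i$ and $d_{a+1}, \dots, d_{k-1}$ strictly below, and extend the tie breaker so that all dummies precede all originals and its restriction to the originals agrees with the original tie breaker. Crucially, in every completion the original candidates occupy positions $a+1, \dots, a+m_0$, so the decomposition ensures they receive from the original voters exactly the scores they had in the top-$1$ instance.

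To force the dummies into the top $k - 1$ ranks, I would add $N$ copies of the filler voters $M_{i, j}$ depicted in the figure, where $d_i$ heads the cyclic block of dummies in the top $k - 1$ positions and $c_j$ heads the cyclic block of originals in the remaining positions. By cyclic symmetry every dummy receives the same filler contribution $N m_0 \sum_{t=1}^{k-1} r(m, t)$ and every original receives $N (k - 1) \sum_{t=k}^{m} r(m, t)$, where $m = m_0 + k - 1$. Their difference equals $N \sum_{i=1}^{k-1} \sum_{j=k}^{m} (r(m, i) - r(m, j))$, whose summands are non-negative and whose $(i, j) = (1, m)$ term is $r(m, 1) - r(m, m) > 0$. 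By the polynomial-scores hypothesis, a polynomially large $N$ makes this gap exceed $n_0 \cdot r(m, 1)$, the maximum contribution any original voter can give a candidate; hence in every completion the $k - 1$ dummies occupy the top $k - 1$ ranks.

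The equivalence is then immediate: in every completion of the new instance, rank $k$ is occupied by the top-ranked original candidate, which by the score- and tie-preservation coincides with the top-$1$ winner of the original instance. The principal obstacle I anticipate is establishing strict positivity of the filler gap while keeping $N$ polynomial; this is exactly where both strong purity (to preserve $\Vec{s}_{m_0}$ as a contiguous block of $\Vec{s}_{m_0+k-1}$) and polynomial scores (to keep $N$ polynomial in the input) are essential, which explains the text's observation that mere purity does not seem to suffice.
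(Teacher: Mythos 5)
Your proposal is correct and follows essentially the same route as the paper's proof: a reduction from \PW for the same rule, padding with $k-1$ dummy candidates split into top and bottom blocks according to the strong-purity decomposition of $\Vec{s}_{m+k-1}$, polynomially many copies of the circular filler votes $M_{i,j}$ to force the dummies into the top $k-1$ ranks, and a tie breaker placing the dummies first and $c$ first among the originals. Your double-sum expression for the filler gap is a slightly cleaner justification of its strict positivity than the paper's inequality, but the construction and the roles played by strong purity and polynomial scores are identical.
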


\ifLong
\begin{proof}
  Let $r$ be a positional scoring rule that satisfies the conditions
  of the theorem, and let us denote $r$ by $\set{\Vec{s}_m}_{m>1}$. We
  use a reduction from \PW under $r$. Consider the input $\P$ and $c$
  for $\PW$ over a set $C$ of $m$ candidates.  Let $m' = m+k-1$. Since
  $r$ is strongly pure, there is an index $t \leq k-1$ such that
\begin{align*}
  \Vec{s}_{m'} =& (\Vec{s}_{m'}(1), \dots, \Vec{s}_{m'}(t)) \circ \Vec{s}_m \\
                & \circ (\Vec{s}_{m'}(t+m+1), \dots, \Vec{s}_{m'}(m'))\,.
\end{align*}
That is, $\Vec{s}_{m'}$ is obtained from $\Vec{s}_m$ by inserting $t$
values at the top coordinates and $k-1-t$ values at the bottom
coordinates.  We define $C'$, $\P'$ and $\tieprime$ as follows.
\begin{itemize}
\item $C' = C \cup D_1 \cup D_2$ where $D_1 = \set{d_1, \dots, d_t}$
  and $D_2 = \set{d_{t+1}, \dots, d_{k-1}}$. Denote
  $D = D_1 \cup D_2$.
\item $\P'$ is the concatenation $\Q\circ\M$ of two partial
  profiles. The first is $\Q=(Q_1, \dots, Q_n)$, where $Q_i$ is the
  same as $P_i$, except that the candidates of $D_1$ are placed at the
  top positions and the candidates of $D_2$ are placed at the bottom
  positions. Formally, $Q_i \eqdef P_i \cup P(D_1, C, D_2)$. The second,
 $\M$, consists of $n \cdot \Vec{s}_{m'}(1)$ copies of the
  profile
  $$ \set{M_{i,j}}_{i = 1, \dots, k-1 \,,\, j = 1, \dots, m} $$
  where $M_{i,j}$ is $M_i(D) \circ M_j(C)$ for the
  circular votes $M_i(D)$ and $M_j(C)$ as defined in the proof of
  Theorem~\ref{thm:BordaNec}.
    \item $\tieprime = O(D, \set{c}, C \setminus \set{c})$.
\end{itemize}

We show that the candidates of $D$ always defeat all other
candidates. For every $d \in D$, the score of $d$ in $\M$ is $s(\M, d) = n \cdot \Vec{s}_{m'}(1) \cdot m \sum_{i=1}^{k-1} \Vec{s}_{m'}(i)$, and for every $c' \in C$ the score in $\M$ is
\begin{align*}
    & s(\M, c') = n \cdot \Vec{s}_{m'}(1) 
    \cdot (k-1) \sum_{i=k}^{m'} \Vec{s}_{m'}(i) \leq \\
    & n \cdot \Vec{s}_{m'}(1) 
    \cdot \rpar{m \sum_{i=1}^{k-1} \Vec{s}_{m'}(i)-1} = s(\M, d) - n \cdot \Vec{s}_{m'}(1)
\end{align*}
where the inequality is due to the assumption that
$\Vec{s}_{m'}(1) > \Vec{s}_{m'}(m')$. Let $\T'$ be a completion of $\P'$, we get that
\begin{align*}
    s(\T', c') &\leq n \cdot \Vec{s}_{m'}(1) + s(\M, c') \\
    & \leq \Vec{s}_{m'}(1) + s(\M, d) - n \cdot \Vec{s}_{m'}(1)  \leq s(\T', d)\,.
\end{align*}
Since the candidates of $D$ are the first candidates in $\tieprime$,
they always defeat the candidates of $C$.

We show that $c$ is a possible winner for $\P$ if and only if $c$ is a possible top-$k$ winner for $(C', \P', \tieprime)$. Let $\T = (T_1, \dots T_n)$ be a completion of $\P$ where $c$ is a winner. Consider the
completion $\T' = (T'_1, \dots T'_n) \circ \M$ of $\P'$ where
$T'_i = O(D_1) \circ T_i \circ O(D_2)$. For every $c' \in C$, we know that
$s(\T', d) \geq s(\T', c')$ for every $d \in D$, and
from the property of $\Vec{s}_{m'}$ we get that
$$ s(\T', c') = s(\T, c') + n \cdot \Vec{s}_{m'}(1) \cdot
\sum_{i=k}^{m'} \Vec{s}_{m'}(i)\,. $$
From the choice of $\tieprime$, $c$ defeats all candidates of $C \setminus \set{c}$ in $\T'$, hence $c'$ is a top-$k$ winner in $\T'$.
Conversely,
let $\T' = (T'_1, \dots T'_n) \circ \M$ be a completion of $\P'$ where $c$ is a top-$k$ winner, define a completion $\T$ of $\P$ by removing $D$ from all orders in $(T'_1, \dots T'_n)$. For every $c' \in C$ we have
$$ s(\T, c') = s(\T', c') - n \cdot \Vec{s}_{m'}(1) \cdot
\sum_{i=k}^{m'} \Vec{s}_{m'}(i) $$
hence $c$ is a winner in $\T$.
\end{proof}
\else
\begin{hintproof}
Let $r$ be a rule that satisfies the conditions
  of the theorem, and let us denote $r$ by $\set{ \Vec{s}_m }_{m \in \natural^+}$. We
  use a reduction from \PW under $r$. Consider the input $\P$ and $c$
  for $\PW$ over a set $C$ of $m$ candidates.  Let $m' = m+k-1$. As
  $r$ is strongly pure, there is $t \leq k-1$ such that
%\begin{align*}
  $\Vec{s}_{m'} = (\Vec{s}_{m'}(1), \dots, \Vec{s}_{m'}(t)) \circ \Vec{s}_m 
                 \circ (\Vec{s}_{m'}(t+m+1), \dots, \Vec{s}_{m'}(m'))$.
%\end{align*}
That is, $\Vec{s}_{m'}$ is obtained from $\Vec{s}_m$ by inserting $t$
scores at the top coordinates and $k-1-t$ scores at the bottom
coordinates.  We define $C'$, $\P'$ and $\tieprime$ as follows.
\begin{itemize}
\item $C' = C \cup D_1 \cup D_2$ where $D_1 = \set{d_1, \dots, d_t}$
  and $D_2 = \set{d_{t+1}, \dots, d_{k-1}}$. Denote
  $D = D_1 \cup D_2$.
\item $\P'$ is the concatenation $\Q\circ\M$ of two partial
  profiles. The first is $\Q=(Q_1, \dots, Q_n)$, where $Q_i \eqdef P_i \cup P(D_1, C, D_2)$. The second,
 $\M$, consists of $n \cdot \Vec{s}_{m'}(1)$ copies of the
  profile
  $ \set{M_{i,j}}_{i = 1, \dots, k-1 \,,\, j = 1, \dots, m} $
  where $M_{i,j}$ is $M_i(D) \circ M_j(C)$ for the
  circular votes $M_i(D)$ and $M_j(C)$ as defined in proof of
  Theorem~\ref{thm:BordaNec}.
    \item $\tieprime = O(D, \set{c}, C \setminus \set{c})$.
\end{itemize}

We can show $c$ is a possible winner for $C$ and $\P$ if
and only if $c$ is a possible top-$k$ winner for $(C', \P', \tieprime)$.
\end{hintproof}
\fi

The proof of Theorem~\ref{thm:PolyRulePosMemk} can be easily adjusted
to show the hardness of determining whether a given candidate set of
the fixed size $k$ is a top-$k$ set.

\begin{theorem}
  \label{thm:PolyRulePosTopkSetFixedK}
  Suppose that the positional scoring rule is strongly pure, has
  polynomial scores, and is neither plurality nor veto. Then for every
  fixed $k$ it is NP-complete to decide whether a given candidate set
  is a possible top-$k$ set.
\end{theorem}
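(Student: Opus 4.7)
The plan is to reuse the reduction from the proof of Theorem~\ref{thm:PolyRulePosMemk} essentially unchanged, and only modify the query produced by the reduction. Given an instance $(\P, c)$ of \PW under $r$ over a candidate set $C$, I would construct $(C', \P', \tieprime)$ exactly as in that proof, where $C' = C \cup D$ with $|D| = k-1$, and then output the size-$k$ set $D \cup \set{c}$ as the candidate set whose status as a possible top-$k$ set we wish to test.

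For correctness, the decisive observation is already contained in the proof of Theorem~\ref{thm:PolyRulePosMemk}: in every completion $\T'$ of $\P'$, each candidate of $D$ has a score at least as high as every candidate of $C$, and since $\tieprime = O(D, \set{c}, C \setminus \set{c})$ ranks all of $D$ above all of $C$, the top $k-1$ positions of $R_{\T'}$ are always occupied precisely by the members of $D$. Consequently, the $k$th position is whichever candidate of $C$ is highest-ranked in $R_{\T'}$, and so $D \cup \set{c}$ equals the top-$k$ set in $\T'$ if and only if $c$ is the top-ranked candidate of $C$ under $R_{\T'}$. Because $\tieprime$ also places $c$ above every $c' \in C \setminus \set{c}$, this latter condition reduces to $s(\T', c) \geq s(\T', c')$ for every $c' \in C \setminus \set{c}$, which by the additive score identity from the proof of Theorem~\ref{thm:PolyRulePosMemk} (namely $s(\T', c'') = s(\T, c'') + \text{const}$ for $c'' \in C$, where $\T$ is the completion of $\P$ obtained by deleting $D$ from every vote of $\T'$) is equivalent to $c$ being a possible winner of $\P$ under $r$.

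Hardness then follows from the NP-hardness of \PW for $r$ granted by the Classification Theorem together with the assumptions on $r$, and membership in NP is immediate by guessing a completion and verifying the induced top-$k$ set in polynomial time. I do not anticipate any substantial obstacle beyond what was already handled in Theorem~\ref{thm:PolyRulePosMemk}; the only minor point to double-check is that $D$ occupies the top $k-1$ positions of $R_{\T'}$ strictly before every candidate of $C$ even when the scores tie, but this is immediate because $\tieprime$ wins all such ties in favour of $D$.
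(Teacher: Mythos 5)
Your proposal is correct and matches the paper's own argument: the paper likewise reuses the reduction from Theorem~\ref{thm:PolyRulePosMemk} verbatim and observes that $c$ is a possible winner of $\P$ under $r$ if and only if $D \cup \set{c}$ is a possible top-$k$ set of $\P'$, relying on the fact that the candidates of $D$ always occupy the top $k-1$ positions and that scores of candidates in $C$ shift by a common additive constant. Nothing further is needed.
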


\section{Concluding Remarks}
We studied the problems of detecting the necessary and possible top-$k$ winners over incomplete voting profiles. We showed that these problems are fundamentally harder than their classic top-$1$ counterparts (necessary and possible winners) when $k$ is given as part of the input. For a fixed $k$, we have generally recovered the tractable positional scoring rules of the top-$1$ variant. Many problems are left for investigation in future research: completing our results towards full classifications (of the class of pure rules), establishing useful tractability conditions for an input $k$, further investigating the parameterized complexity of the problem when $k$ is the parameter, detecting the necessary and possible committee members under different committee-selection policies, and incorporating fairness and diversity constraints.

\balance

\ifLong
\else
\newpage
\fi
%% The file named.bst is a bibliography style file for BibTeX 0.99c
\bibliographystyle{named}
\bibliography{ijcai20}

\end{document}

\section{Necessary and Possible Committee Members}

We start with some notations. For a set $A$ and a partition
$A_1, ..., A_t$ of $A$, let $P(A_1, ..., A_t)$ denote the partial
order
$$ P(A_1, ..., A_t) \eqdef \set{a_1 \succ ... \succ a_t : \forall i \in [t], a_i \in A_i}$$
and let $O(A_1, ..., A_t)$ denote an arbitrary linear order on $A$ that completes $P(A_1, ..., A_t)$. A linear order $a_1 \succ ... \succ a_t$ is also denoted as a vector $(a_1, ..., a_t)$. A \e{concatenation} of two linear orders is $(a_1, ..., a_t) \circ (b_1, ..., b_\ell) = (a_1, ..., a_t, b_1, ..., b_\ell)$. Similarly, a concatenation of two voting profiles is $(P_1^1, ..., P_1^t) \circ (P_2^1, ..., P_2^\ell) = (P_1^1, ..., P_1^t, P_2^1, ..., P_2^\ell)$.

In some proof we use the technique of circular voting blocks from \cite{DBLP:conf/atal/BaumeisterRR11}. For a set $A = \set{a_1, ..., a_t}$, for every $i \in [t]$ let the $i$th circular vote be
\begin{align*}
    M_i(A) \eqdef (a_i, a_{i+1}, ..., a_t, a_1, a_2, ..., a_{i-1})
\end{align*}

\subsection{Varying Committee Size}

\begin{definition}[X3C]
Given $V = \set{v_1, ..., v_{3q}}$ and a collection $E$ of 3-element subsets of $V$, we are asked whether we can cover all the elements of $V$ using non-overlapping sets in $E$.
\end{definition}
For every vertex $v \in V$, we denote the edges of $v$ by $E(v) = \set{e \in E : v \in e}$.

\begin{theorem}
  \label{thm:PluralityNecMem}
  $\necmem$ is coNP-complete under the plurality rule.
\end{theorem}
\begin{proof}
By reduction from X3C. Given $V,E$ we construct an election instance $\election = (C, P, \tie)$ with respect to plurality where $C = E \cup \set{c^*}$, $\tie = O(E, \set{c^*})$ and the partial profile is $P = (P_1, ..., P_{3q}, T_{3q+1}, T_{3q+2}, T_{3q+3})$. For every $i \in [3q]$,
$$ P_i = P (E(v), E \setminus E(v), \set{c^*})$$
This means that the $i$th voter can vote only for edges that cover $v_i$ (assume that every vertex in $V$ can be covered by at least one edge, otherwise the problem is trivial). For $ i > 3q$ the order is $T_i = O(\set{c^*}, E)$ (these orders are complete). We show that the X3C has a solution if and only if $c^*$ is not a necessary $q$-committee member. \par
Suppose $c^*$ is not a necessary $q$-committee member, that is, there are $q$ candidates $e_{i_1}, ..., e_{i_q}$ and a completion $T$ of $P$ such that $s(T, e_{i_j}) \geq s(T,c^*) = 3$ for every $j \in [q]$. Since each edge can get at most 3 votes we get that $s(T, e_{i_j}) = 3$ for every $j \in [q]$. Therefore $e_{i_1}, ..., e_{i_q}$ is an exact cover: every vertex $v_i$ is covered by the edge that $P_i$ voted for, and the edges are not overlapping (if two edges are overlapping then one of them gets at most 2 votes). \par
Conversely, given an X3C solution $Q = \set{e_{i_1}, ..., e_{i_q}}$ define a profile $T = (T_1, ..., T_{3q+3})$ such that for every $i \in [3q]$,
$$ T_i = O(Q \cap E(v_i), E(v_i) \setminus Q, E \setminus E(v_i), \set{c^*})$$
Every $T_i$ extends $P_i$, $s(T, e) = 3$ for every $e \in Q$ and $s(T, c^*) = 3$. Hence all edges in $Q$ defeat $c^*$, and $c^*$ is a not a necessary $q$-committee member.
\end{proof}

\begin{definition}[Dominating Set]
  Given an undirected graph $G = (V,E)$ and an integer $k$, we are asked whether there exists a subset $D \subseteq V$ of size $k$ such that every vertex is either in $D$ or is adjacent to some vertex in $D$.
\end{definition}

\begin{theorem}
\label{thm:PluralityPosMem}
$\posmem$ is NP-complete under the plurality rule and is W[2] hard with the parameter $k$.
\end{theorem}
\begin{proof}
By reduction from dominating set. Given a graph $(V = \set{v_1, ..., v_n}, E)$ we construct an election $\election = (C, P, \tie)$ with respect to plurality where $C = V \cup \set{c^*}$, $\tie = O(\set{c^*}, V)$ and $P = (P_1, ..., P_n)$. Let $N(v_i)$ be the set of neighbours of $v_i$ in the graph, and let $N(v_i)^* =  N(v_i) \cup \set{v_i}$. For every $i \in [n]$,
$$ P_i = P(N(v_i)^*, V \setminus N(v_i)^*, \set{c^*}) $$
This means that the $i$th voter can vote only for vertices that dominate $v_i$. We show that the graph has a dominating set of size $k$ if and only of $c^*$ is a possible $k$-committee member. \par
Suppose there is a dominating set $D = \set{v_{i_1}, ..., v_{i_k}}$, consider the completion
$$ T_i = O(N(v_i)^* \cap D, N(v_i)^* \setminus D, V \setminus N(v_i)^*, \set{c^*})$$
In this completion for every $v \notin D$ we get $s(T, v) = 0$. These are $n-k$ candidates that $c^*$ defeats, therefore $c^*$ is a possible $k$-committee member. Conversely, if $c^*$ is a possible $k$-committee member then in some completion $T$ it defeats at least $n-k$ candidates, and these candidates have a score 0 in $T$. Let $D$ be the set of candidates that $c^*$ does not defeat in $T$, all voters voted for candidates in $D$ and $|D| \leq k$. A voter $P_i$ can only vote for vertices which dominate $v_i$, hence $D$ is a dominating set of size at most $k$. \par
The problem of dominating set is also W[2] hard with the parameter $k$, so this reduction implies that $\posmem$ is W[2]-hard with the parameter $k$ under the plurality rule.
\end{proof}

\begin{definition}
Given a partial order $P_i$, the \e{reversed order} is $P_i^R = \set{x \succ y : (y \succ x) \in P_i}$. 
\end{definition}
Note that $T_i$ extends $P_i$ if and only if $T_i^R$ extends $P_i^R$.

\begin{definition}
Given a binary positional scoring rule $\Vec{s}_m$, the \e{complementary-reversed} scoring rule is $\Vec{s}_m^R$ such that for every $i \in [m]$, $\Vec{s}_m^R(i) = 1-\Vec{s}_m(m+1-i)$.
\end{definition}
For example, the complementary-reversed rule of $t$-approval is $t$-veto, and specifically the complementary-reversed rule of plurality is veto.

\begin{lemma}
\label{lemma:reversing}
For every binary positional scoring rule $\Vec{s}_m$ there is a reduction
\begin{enumerate}
    \item from $\necmem$ under $\Vec{s}_m$ to $\overline{\posmem}$ under $\Vec{s}_m^R$.
    \item from $\posmem$ under $\Vec{s}_m$ to $\overline{\necmem}$ under $\Vec{s}_m^R$.
\end{enumerate}
\end{lemma}
\begin{proof}
Given $\election^1 = (C, \Vec{s}_m, P^1, \tie)$ where $P^1 = (P_1, ..., P_n)$ consider $\election^2 = (C, \Vec{s}_m^R, P^2, \tie^R)$ where $P^2 = (P_1^R, ..., P_n^R)$. The total scores of the candidates in $\election^b$ are denoted by $s^b$. Let $T^1 = (T_1, ..., T_n)$ be a completion of $P^1$, observe the completion $T^2 = (T_1^R, ..., T_n^R)$ of $P^2$. For every candidate $c$ and a voter $v_i$ we get $s^2(T_i^R, c) = 1-s^1(T_i, c)$, so overall $s^1(T^2, c) = n-s^2(T^1, c)$. Since the tie-breaking order is also reversed, it holds that $\rank(T^2, c) = m+1-\rank(T^1, c)$ for every $c \in C$. In same way, if $T^2$ is a completion of $P^2$ then reversing the orders gives a completion $T^1$ of $P^1$ such that $\rank(T^1, c) = m+1-\rank(T^2, c)$ for every $c \in C$. We can deduce that for any candidate $c$ and for any integer $k$,
\begin{enumerate}
    \item There exists a completion $T^1$ of $P^1$ such that $\rank(T^1, c) \geq k+1$ ($c$ is not a necessary $k$-committee member in $\election^1$) if and only if there exists a completion $T^2$ of $P^2$ such that $\rank(T^2, c) \leq m-k$ ($c$ is a possible $(m-k)$-committee member in $\election^2$).
    \item There exists a completion $T^1$ of $P^1$ such that $\rank(T^1, c) \leq k$ ($c$ is a possible $k$-committee memeber in $\election^1$) if and only if there exists a completion $T^2$ of $P^2$ such that $\rank(T^2, c) \geq m-k+1$ ($c$ is not a necessary $(m-k)$-committee member in $\election^2$).
\end{enumerate}
\end{proof}

We can use lemma \ref{lemma:reversing} and theorems \ref{thm:PluralityNecMem}, \ref{thm:PluralityPosMem} to deduce hardness results for the veto rule.
\begin{corollary}
\label{cor:NecPosMemVeto}
$\necmem$ is coNP-complete and $\posmem$ is NP-complete under the veto rule.
\end{corollary}

In addition, it is not hard to see that for any candidate $c$, choosing $k=1$ and $\tie = O(\set{c}, C \setminus \set{c})$ in the proof of lemma \ref{lemma:reversing} implies that $c$ is a PW in $\election^1$ if and only if $c$ is not a necessary $(m-1)$-committee member in $\election^2$. Since PW is NP-complete under every pure binary positional scoring rule other than plurality and veto, we get following result (combining with Theorem \ref{thm:PluralityNecMem} and Corollary \ref{cor:NecPosMemVeto}).

\begin{corollary}
\label{cor:NecMemBinary}
$\necmem$ is coNP-complete under every pure binary positional scoring rule.
\end{corollary}

\begin{definition}
A positional scoring rule $\Vec{u}_m$ \e{polynomially contains} another rule $\Vec{v}_m$ if there exist a polynomial $p(m)$, an index $i_m \leq p(m)-m$ and $a_m>0, b_m$ which satisfy $\Vec{v}_{m}(j) = a_m \cdot \Vec{u}_{p(m)}(i_m+j) + b_m$ for every $m, j \in [m]$.
\end{definition}
For example, plurality is polynomially contained in $t$-approval for any fixed $t$, by choosing $p(m) = m+t-1, i_m = t-1, a_m = 1, b_m = 0$. Similarly, veto is polynomially contained in $t$-veto.

\begin{lemma}
\label{lemma:polyContained}
If a positional scoring rule $\Vec{s}^2$ polynomially contains $\Vec{s}^1$ then there is a reduction
\begin{enumerate}
    \item from $\necmem$ under $\Vec{s}^1$ to $\necmem$ under $\Vec{s}^2$.
    \item from $\posmem$ under $\Vec{s}^1$ to $\posmem$ under $\Vec{s}^2$. 
\end{enumerate}
\end{lemma}
\begin{proof}
Let $p(m), i_m, a_m, b_m$ be the functions from the definition of polynomially contained vectors. Given $\election^1 = (C^1, P^1, \Vec{s}^1_m, \tie^1)$ where $P^1 = (P_1^1, ..., P_n^1)$ consider $\election^2 = (C^2, P^2, \Vec{s}_{p(m)}^2, \tie^2)$ where:
\begin{itemize}
    \item $C^2 = C^1 \cup W^1 \cup W^2$ where $W^1, W^2$ are disjoint sets of new candidates of sizes $ |W^1| = i_m, |W^2| = p(m)-i_m-m $. Note that $|C^2| = p(m)$.
    \item The voting profile is $P^2 = (P_1^2, ..., P_n^2)$. In every $P_j^2$ the $i_m$ highest ranked candidates are $W^1$, the $p(m)-i_m-m$ lowest ranked candidates $W^2$, and between these two sets the candidates $C^1$ are the same as in $P_j^1$. Formally, $P_j^2 = P_j^1 \cup P(W^1, C^1, W^2)$.
    \item In the tie-breaking order, the $i_m$ highest ranked candidates are $W^1$, the $p(m)-i_m-m$ lowest ranked candidates $W62$, and between these two sets the candidates $C^1$ are the same as in $\tie^1$. Formally, $\tie^2 = O(W^1) \circ \tie^1 \circ O(W^2)$.
\end{itemize}
Let $T^1 =(T_1^1, ..., T_n^1)$ be a completion of $P^1$. Observe the completion $T^2 =(T_1^2, ..., T_n^2)$ of $P^2$ where $T^2_j = O(W^1) \circ T^1_j \circ O(W^2)$. Since $s^1$ is polynomially contained in $s^2$, for every $c \in C^1$ we get that $s^2(T_j^2, c) = a_m \cdot s^1(T_j^1, c) + b_m$ so overall $s^2(T^2, c) = a_m \cdot s^1(T^1, c) + n b_m$ and $\rank(T^2, c) = \rank(T^1, c)+i_m$. Conversely, let $T^2$ be a completion of $P^2$, every $T^2_j$ has to be of the form $O(W^1) \circ O(C^1) \circ O(W^2)$ so removing $W^1, W^2$ from all orders gives a completion $T^1$ of $P^1$ such that $\rank(T^1, c) = \rank(T^2, c) - i_m$ for every $c \in C^1$. These arguments show that for every $c \in C^1$ and $k$, 
\begin{enumerate}
    \item $c$ is a necessary $k$-committee member in $\election^1$ if and only $c$ is a necessary $(k+i_m)$-committee member in $\election^2$.
    \item $c$ is a possible $k$-committee member in $\election^1$ if and only $c$ is a possible $(k+i_m)$-committee member in $\election^2$.
\end{enumerate}
\end{proof}

\begin{definition}
  A scoring rule $\Vec{s}_m$ is \e{polynomially frequent} if there exists a score $x$ and a constant $\varepsilon$ such that $|\set{i : \Vec{s}_m(i) = x}| = \Omega(m^\varepsilon)$.
\end{definition}

\begin{theorem}
$\necmem$ is coNP-complete and $\posmem$ is NP-complete with respect all polynomially frequent pure positional scoring rules.
\end{theorem}
\begin{proof}
Let $m_x$ be the minimal $m$ for which $x$ appears in $\Vec{s}_m$, and let $i_x$ be the index of $x$ in $\Vec{s}_{m_x}$. Consider two cases. First, if $i_x > 1$, then there exist some score $y_m > x$ such that $\Vec{s}_{O(m^{1/\varepsilon})}$ contains the vector $(y_m, x, ..., x)$ of length $m+1$. Choosing $p(m) = O(m^{1/\varepsilon})$, $i_m$ as the index of $y_m$, $a_m = y-x, b_m = x$ shows that in this case plurality is polynomially contained in $\Vec{s}_m$. The hardness results are then derived by theorems \ref{thm:PluralityNecMem}, \ref{thm:PluralityPosMem} and lemma \ref{lemma:polyContained}. \par
Second, if $i_x = 1$, then there exist some score $y_m < x$ such that $\Vec{s}_{O(m^{1/\varepsilon})}$ contains the vector $(x, ..., x, y_m)$ of length $m+1$. Choosing $p(m) = O(m^{1/\varepsilon})$, $i_m$ as the smallest index of $x$, $a_m = x-y, b_m = y$ shows that in this case veto is polynomially contained in $\Vec{s}_m$. The hardness results are then derived by corollary \ref{cor:NecPosMemVeto} and lemma \ref{lemma:polyContained}.
\end{proof}

\begin{theorem}
\label{thm:BordaNecMem}
$\necmem$ is coNP-complete under the Borda rule.
\end{theorem}

\begin{proof}
By reduction from X3C. Given an X3C instance $V = \set{v_1, ..., v_{3q}}, E = \set{e_1, ..., e_m}$ we construct an election $\election = (C, P, \tie)$ with respect to Borda. The candidates are $C = E \cup W \cup \set{c^*}$ where $W = \set{w_1, ..., w_{m-1}}$ and $\tie = O(E, \set{c^*}, W)$. The voting profile is composed of three parts $P = P^1 \circ T^2 \circ T^3$:
\begin{itemize}
    \item $P^1 = (P_1^1, ..., P_{3q}^1)$. For every $i \in [3q]$, only edges which cover $v_i$ can receive a score of $2m-1$, edges which do not cover $v_i$ can receive at most $m-1$, and $c^*$ receives 0. Formally, denote by $d(v_i) = |E(v_i)|$ the degree of $v_i$ in the graph and let $W_{\leq j} = \set{w_1, ..., w_j}, W_{> j} = \set{w_{j+1}, ..., w_{m-1}}$. We assume that $1 \leq d(v_i) \leq m-1$, otherwise the problem is trivial. The partial order is
    \begin{align*}
        P_i^1 = P( & E(v_i), W_{\leq m-d(v_i)}, \\
        & W_{> m-d(v_i)} \cup (E \setminus E(v_i)), \set{c^*})
    \end{align*}
    
    Note that for every $e \in E \setminus E(v_i)$, the number of candidates which are ranked above $e$ in $P_i$ is $|E(v_i)| + m-d(v_i) = m$, so indeed it receives a score of at most $m-1$.
    \item $T^2$ is composed of $S_{cover} \eqdef 3(2m-1) + 3 \sum_{i=1}^{q-1} (m-i)$ copies of the profile $(T^2_1, ..., T^2_m)$. For every $i \in [m]$ the order $T^2_i$ is constructed in the following way. Start with $M_i(E)$, then insert $W$ and $c^*$ such that the score of $c^*$ is $m$ and scores of $e_i, ..., e_m, e_1, ..., e_{i-1}$ are $2m-1, 2m-3, ..., 3, 0$ accordingly if $m$ is even. If $m$ is odd then the scores of the edges are the same as the before but with $m-1$ instead of $m$ and 1 instead of 0. Note that in both cases, the sum of the scores of the edges is the same.
    \item $T^3$ is composed of $4(3q + S_{cover})$ copies of the profile $(T^3_1, ..., T^3_{m+1})$. For this part only, denote $e_{m+1} = c^*$. For every $i \in [m+1]$, $T^3_i = M_i(\set{e_1, ..., e_{m+1}}) \circ O(W)$.
\end{itemize}
Now we state some observations regarding the profile. In $T^2$, the score of $c^*$ is
\begin{align*}
    s(T^2, c^*) = S_{cover} \cdot s((T^2_1, ..., T^2_m), c^*) =  S_{cover} \cdot m^2
\end{align*}
For every $e \in E$, the score is
\begin{align*}
    s(T^2, e) &= S_{cover} \sum_{i=2}^{m} (2i-1) =  S_{cover} \rpar{m^2 - 1}\\
    &= s(T^2, c^*) - S_{cover}
\end{align*}
In $T^3$, for every $w \in W$, the score is
\begin{align*}
    s(T^3, w) &\leq 4(3q + S_{cover}) \cdot m(m-2) \\
    &\leq 4m^2 (3q + S_{cover})
\end{align*}
For every $c \in C \setminus W$, the score is
\begin{align*}
    s(T^3, c) &= 4(3q + S_{cover}) \sum_{i=1}^{m+1} (2m-i) \\
    &= 2(3q + S_{cover})(3m^2+m-2) \\
    &\geq 6m^2 (3q + S_{cover})
\end{align*}
Hence for every pair $w \in W, c \in C \setminus W$ it holds that $s(T^3, c) - s(T^3, w) \geq 2m^2 (3q + S_{cover})$. Let $T = T^1 \circ T^2 \circ T^3$ be a completion of $P$. For every pair $w \in W, c \in C \setminus W$ we know that $s(T^1, w) - s(T^1, c) < 6qm$, $s(T^2, w) - s(T^2, c) < S_{cover} \cdot 2m^2$ and $s(T^3, c) - s(T^3, w) \geq 2m^2 (3q + S_{cover})$. Therefore $s(T, c) - s(T, w) > 0$, which means that the candidates in $C \setminus W$ always defeat all candidates in $W$.

\begin{claim}
If the graph has an X3C then $c^*$ is not a necessary $q$-committee member.
\end{claim}
\begin{proof}
Assume wlog that the exact cover is $Q = \set{e_1, ..., e_q}$ and every edge in the cover is $e_i = \set{i, q+i, 2q+i}$. Define a completion $T = T^1 \circ T^2 \circ T^3$, $T^1 = (T^1_1, ..., T^1_n)$. In $T^1$, every edge in the cover $e_i \in Q$ gets the following scores:
\begin{itemize}
    \item $e_i$ gets $2m-1$ from $T^1_i, T^1_{q+i}, T^1_{2q+i}$, that is, $e_i$ is placed at the top of the vertices which it covers.
    \item $e_i$ gets $m-1$ from the vertices of $e_{i+1}$, gets $m-2$ from the vertices of $e_{i+2}$ and so on (when we reach $e_q$ we go to $e_1$ and continue until $e_{i-1}$). Note that in this way there is no vertex that should give the same score to two different edges, and since $q \leq m$ the range of scores $e_i$ gets is $\set{m-1, m-2,  ..., m-q+1} \subseteq \set{m-1, ..., 1}$ as required by the definition of $P^1$.
\end{itemize}
For every edge $e \in Q$ the total score in $T^1$ is $s(T^1, e) = 3(2m-1) + 3 \sum_{i=1}^{q-1} (m-i) = S_{cover}$ and recall that $s(T^1, c^*) = 0$. Combining this with what we already showed for $T^2, T^3$ implies that
\begin{align*}
    s(T, e) - s(T, c^*) & = S_{cover} - S_{cover} + 0 = 0
\end{align*}
Overall, all candidates in $Q$ defeat $c^*$, hence $c^*$ is not a necessary $q$-committee member.
\end{proof}

\begin{claim}
If $c^*$ is not a necessary $q$-committee member then the graph has an X3C.
\end{claim}
\begin{proof}
Let $T = T^1 \circ T^2 \circ T^2$ be a completion where at least $q$ candidates defeat $c^*$, let $Q$ be the $q$ highest rated  candidates in $T$. $c^*$ always defeats all candidates in $W$ hence $Q \subseteq E$. We show a lower bound and an upper bound on the total score of $Q$ in $T^1$.
\paragraph{Lower bound} As we already showed, every $e \in Q$ should get $s(T^1, e) \geq S_{cover}$ in order to defeat $c^*$, therefore
\begin{align*}
    \sum_{e \in Q} s(T^1, e) &\geq q \cdot S_{cover} = 3q \cdot \rpar{2m-1 + \sum_{i=1}^{q-1} (m-i)} \\
    & = 3mq^2 + 3mq - \frac{3q^3}{2} + \frac{3q^2}{2} - 3 q
\end{align*}
\paragraph{Upper bound} For every $v \in V$ denote by $d_Q(v)$ the degree of $v$ in the sub-graph induced by $Q$, it holds that $\sum_{v \in V} d_Q(v) = 3q$. We get
\begin{align*}
    \sum_{e \in Q} s(T^1, e) &\leq \sum_{i=1}^{3q} \rpar{\sum_{j=1}^{d_Q(v_i)} (2m-j) + \sum_{j=1}^{q-d_Q(v_i)} (m-j)} \\
    & = 3mq^2 + 3mq - \frac{3q^3}{2} + \frac{3q^2}{2} - \sum_{i=1}^{3q} d_Q(v_i)^2 
\end{align*}
Overall, the two bounds imply that $\sum_{i=1}^{3q} d_Q(v_i)^2 \leq 3q$ and this is possible only if all the degrees are 1: When all degrees are 1 the sum is exactly $3q$. If we decrease $d_Q(v_i)$ to zero and increase $d_Q(v_j)$ to 2 then the total sum increases by 3. Any further changes also increase the total sum. Therefore all degrees in the sub-graphs induced by $Q$ are 1, and $Q$ is an X3C.

\end{proof}

\end{proof}

\subsection{Fixed Committee Size}

\begin{definition}
Given a positional scoring rule, a profile $P = (P_1, ..., P_n)$ and alternatives $W = \set{w_1, ..., w_k}$ define the set of possible scores:
$$ \ps_W (P) = \set{(s(T,w_1), ..., s(T,w_k)) \mid T \text{ completes } P} $$
\end{definition}
Note that $\ps_W (P) \subseteq [n \cdot \Vec{s}_m(1)]_0^k$ where $[n \cdot \Vec{s}_m(1)]_0 = \set{0, ..., n \cdot \Vec{s}_m(1)}$.

\begin{lemma}
\label{lemma:scheduling}
For any election $\election = (C, \Vec{s}_m, P, \tie)$, candidates  $W = \set{w_1, ..., w_k} \subseteq C$ and scores $s_1, ..., s_k$ deciding whether $(s_1, ..., s_k) \in \ps_W (P_1)$ can be solved in $\poly(m)$. 
\end{lemma}
\begin{proof}
We use a reduction to a scheduling problem where the tasks are the candidates, every task has execution time of one unit. Each candidate $w_i$ has release time and deadline
\begin{align*}
    & r_{w_i} = \min \set{j \in [n] : \Vec{s}_m(j) = s_i}, \\
    & d_{w_i} = 1+\max \set{j \in [n] : \Vec{s}_m(j) = s_i}
\end{align*}
which ensures that $c_i$ gets the score $s_i$. Other candidates in $C$ have release time 1 and deadline $m+1$. The partial order on the tasks $P_1$ implies that if $(c_j \succ  c_i) \in P_1$ then $c_i$ does not start before $c_j$ is completed. It holds that $(s_1, ..., s_k) \in \ps_W (P_1)$ if and only if the tasks can be scheduled according to all the requirements. A result from \cite{DBLP:journals/siamcomp/GareyJST81} shows an algorithm for scheduling the tasks according to the partial order in $\poly(m)$.
\end{proof}

\begin{lemma}
\label{lemma:dynamic}
For any election $\election = (C, \Vec{s}_m, P, \tie)$ with polynomial scores, fixed $k$ and candidates  $W = \set{w_1, ..., w_k} \subseteq C$ we can find all vectors in $\ps_W (P)$ in polynomial time. 
\end{lemma}
\begin{proof}
First, for every $i \in [n]$, find all vectors in $\text{PS}_W (P_i)$ by going over all vectors $\Vec{v} \in[n \cdot \Vec{s}_m(1)]_0^k$ and checking if $\Vec{v} \in \ps_W (P_i)$ using lemma \ref{lemma:scheduling}. Then, given  $\ps_W (P_1, ..., P_i)$ observe that
\begin{align*}
    & \ps_W (P_1, ..., P_{i+1}) \\
    & = \set{ \Vec{u} + \Vec{v} : \Vec{u} \in \ps_W (P_1, ..., P_i), \Vec{v} \in \ps_W (P_{i+1})}
\end{align*}
where $\Vec{u} + \Vec{v}$ is a point-wise sum of the two vectors $(\Vec{u} + \Vec{v})(j) = \Vec{u}(j)+ \Vec{v}(j)$.
\end{proof}

\begin{theorem}
\label{thm:PolyRuleNecMemk}
For any fixed $k$, $\necmempk$ can be solved in polynomial time under every positional scoring rule with polynomial scores.
\end{theorem}
\begin{proof}
To determine if a candidate $c$ is a necessary $k$-committee member we search for a counterexample, that is, $k$ candidates $w_1, ..., w_k$ and a completion $T$ where $w_1, ..., w_k$ all defeat $c$. For this goal, for every $\set{w_1, ..., w_k} \subseteq C \setminus \set{c}$ we find all elements in $\text{PS}_{\set{c, w_1, ..., w_k}} (P)$ by the algorithm from lemma \ref{lemma:dynamic} and check if it contains some vector of scores for which $w_1, ..., w_k$ all defeat $c$.
\end{proof}

\begin{definition}[Polygamous Matching]
Given a bipartite graph $G = (V \cup U, E)$ and natural numbers $\alpha_u \leq \beta_u$ for all $u \in U$, is there a subset of $E$ where each $v \in V$ is incident to exactly one edge and every $u \in U$ is incident to at least $\alpha_u$ edges and at most $\beta_u$ edges?
\end{definition}
It was shown in \cite{DBLP:conf/pods/KimelfeldKT19} that Polygamous Matching is solvable in polynomial time. For a partial order $P_i$ over $C$, denote by $\max(P_i)$ the candidates which can be in the top position in completions of $P_i$, and the same for the bottom position $\min(P_i)$.

\begin{lemma}
\label{lemma:committeeMatching}
The following decision problem can be solved in polynomial time with respect to plurality and veto: Given $\election = (C, P, \tie)$ and $\alpha'_c, \beta'_c$ for every candidate $c$, is there a completion $T$ such that $\alpha'_c \leq s(T, c) \leq \beta'_c$ for every $c \in C$?
\end{lemma}
\begin{proof}
For both rules the problem is solved by Polygamous Matching. For plurality, $V = \set{v_1, ..., v_n}$ (the set of voters), $U = C$, $E$ connects $u \in U$ and $v_i \in V$ whenever $u \in \max(P_i)$. The bounds are simply $\alpha_c = \alpha'_c, \beta_c = \beta'_c$. For veto, note that receiving a score $s$ is equivalent to being placed at the bottom position at $n-s$ voters. $V,C$ are the same as before, $E$ connects $u \in U$ and $v_i \in V$ whenever $u \in \min(P_i)$.  The bounds are $\alpha_c = n-\beta'_c, \beta_c = n-\alpha'_c$.
\end{proof}

\begin{theorem}
For any fixed $k$, $\posmempk$ can be solved in polynomial time under the rules plurality and veto.
\end{theorem}
\begin{proof}
To determine if a candidate $c$ is a possible $k$-committee member we search for a completions in which $c$ defeats $m-k$ candidates. For every $W \subseteq C \setminus \set{c}$ of size $m-k$ and score $s\leq n$ we use lemma \ref{lemma:committeeMatching} to check if there exists a completion $T$ such that $s(T,c) \geq s$, for every $w \in W$ if $c \tie w$ then $s(T,w) \leq s$, otherwise $s(T,w) < s$.
\end{proof}

\begin{theorem}
\label{thm:PolyRulePosMemk}
For any fixed $k$, $\posmempk$ is NP-complete under every strongly pure positional scoring rule with polynomial scores, other than plurality and veto.
\end{theorem}

\begin{proof}
By reduction from PW with respect to the same scoring rule. Define $m' = m+k-1$. For every $m$, there exists an index $t \leq k-1$ such that 
\begin{align*}
    \Vec{s}_{m'} =& (\Vec{s}_{m'}(1), ..., \Vec{s}_{m'}(t)) \circ \Vec{s}_m \\
    & \circ (\Vec{s}_{m'}(t+m+1), ..., \Vec{s}_{m'}(m'))
\end{align*}
That is, $\Vec{s}_{m'}$ is obtained from $\Vec{s}_m$ by inserting $t$ values at the top coordinates and $k-1-t$ values at the bottom coordinates.
Given $\election^1 = (C^1 = \set{c_1, ..., c_m}, \Vec{s}_m, P^1, \tie^1)$, define $\election^2 = (C^2, \Vec{s}_{m'}, P^2, \Vec{s}_{m'}, \tie^2)$ where
\begin{itemize}
    \item $C^2 = C^1 \cup W_1 \cup W_2$ where $W^1 = \set{w_1, ..., w_t}, W^2 = \set{w_{t+1, ..., w_{k-1}}}$. Denote $W = W^1 \cup W^2$.
        \item $P^2$ is composed of two parts. The first part is $(P^2_1, ..., P^2_n)$, where $P^2_i$ is the same as $P^1_i$ except that candidates in $W^1$ are placed at the top positions and the candidates in $W^2$ are placed at the bottom positions. Formally, $P^2_i = P^1_i \cup P(W^1, C^1, W^2)$. The second part, $Q$, is composed of $n \cdot \Vec{s}_{m'}(1)$ identical copies of the profile
        $$ \set{Q_{i,j}}_{\begin{subarray}{l} i = 1, ..., k-1 \\ j = 1, ..., m\end{subarray}} $$
        For every $i \in [k-1], j \in [m]$, $Q_{i,j} = M_i(W) \circ M_j(C^1)$.
    \item $\tie^2 = O(W) \circ \tie^1$.
\end{itemize}

We show that the candidates of $W$ always defeat all other candidates. Let $T^2$ be a completion of $P^2$. For every $w \in W$, the score of $w$ in each copy in $Q$ is $m \sum_{i=1}^{k-1} \Vec{s}_{m'}(i)$, therefore
\begin{align*}
    s^2(T^2, w) &\geq S(Q, w) = n \cdot \Vec{s}_{m'}(1) \cdot m \sum_{i=1}^{k-1} \Vec{s}_{m'}(i)
\end{align*}
For every $c \in C^1$, the score of $c$ in in each copy in $Q$ is $(k-1) \sum_{i=k}^{m'} \Vec{s}_{m'}(i) \leq m \sum_{i=1}^{k-1} \Vec{s}_{m'}(i)-1$, where the inequality is due to the assumption that $\Vec{s}_{m'}(1) > \Vec{s}_{m'}(m')$. The total score of $c$ in $T^2$ is
\begin{align*}
    s^2(T^2, c) & \leq n \cdot \Vec{s}_{m'}(1) + s(Q, c) \\
    & = n \cdot \Vec{s}_{m'}(1) + n \cdot \Vec{s}_{m'}(1) 
    \cdot (k-1) \sum_{i=k}^{m'} \Vec{s}_{m'}(i) \\
    & \leq n \cdot \Vec{s}_{m'}(1) + n \cdot \Vec{s}_{m'}(1) 
    \cdot \rpar{m \sum_{i=1}^{k-1} \Vec{s}_{m'}(i)-1} \\
    & = n \cdot \Vec{s}_{m'}(1) \cdot m \sum_{i=1}^{k-1} \Vec{s}_{m'}(i) \leq s^2(T^2, w)
\end{align*}
Since $W$ are the first candidates in $\tie^2$, they always defeat the candidates of $C^1$. 

We show for every $c \in C^1$, $c$ is top-$k$-PW in $\election^2$ if and only if $c$ is a top-1-PW in $\election^1$. Let $T^1 = (T^1_1, ... T^1_n)$ be a completion of $P^1$. Observe the completion $T^2 = (T^2_1, ... T^2_n) \circ Q$ of $P^2$ where $T^2_i = O(W_1) \circ T^1_i \circ O(W_2)$. By the claim we get $s^2(T^2, w) \geq s^2(T^2, c)$ for every $c \in C^1, w \in W$ and by the property of $\Vec{s}_{m'}$ we get $s^2(T^2, c) = s^1(T^1, c) + n \cdot \Vec{s}_{m'}(1) \cdot \sum_{i=k}^{m'} \Vec{s}_{m'}(i)$ for every $c \in C^1$. Conversely, given a completion $T^2 = (T^2_1, ... T^2_n) \circ Q$ of $P^2$, define a completion $T^1$ of $P^1$ by removing $W$ from all orders in $(T^2_1, ... T^2_n)$. We get $s^1(T^1, c) = s^2(T^2, c) - n \cdot \Vec{s}_{m'}(1) \cdot \sum_{i=k}^{m'} \Vec{s}_{m'}(i)$ for every $c \in C^1$.
\end{proof}

\section{Necessary and Possible Committees}

\begin{theorem}
\label{thm:NecCom}
$\neccom$ can be solved in $k \cdot \poly(n,m)$ under every positional scoring rule.
\end{theorem}
\begin{proof}
Given $W \subseteq C$ of size $k$ we search for a counterexample, that is, a pair of candidates $w \in W, c \in C \setminus W$ and a completion $T$ for which $\rank(T, c) < \rank(T,w)$. That task of determining whether a candidate $c$ can defeat another candidate $w$ was solved in $\poly(n,m)$ as part of the algorithm from \cite{DBLP:journals/jair/XiaC11} for NW with respect to any positional scoring rule. The only thing we need to change is the last step of the algorithm, which checks if $c$ defeats $c$ is a specific completion. Instead of $S(c) \geq S(w)$ the condition is $(S(c) > S(w)) \lor (S(c) = S(w) \land c \tie w)$.
\end{proof}

\begin{theorem}
\label{thm:PluralityVetoPosCom}
$\poscom$ can be solved in $k \cdot \poly(n,m)$ with under the rules plurality and veto.
\end{theorem}
\begin{proof}
Given $W \subseteq C$ of size $k$, for every $w \in W, s \leq n$ we search for a completion $T$ in which $w$ gets $s$ votes, for every $w' \in W \setminus \set{w}$ it holds that $\rank(T, w') < \rank(T, w)$ and for every $c \in C \setminus W$ it holds that  $\rank(T, w) < \rank(T, c)$. For this goal, we check if there exists a completion $T$ such that:
\begin{itemize}
    \item $s(T, w) = s$.
    \item For every $w' \in W \setminus \set{w}$, if $w' \tie w$ then $s(T, w') \geq s$, otherwise $s(T, w') > s$.
    \item For every $c \in C \setminus W$, if $w \tie c$ then $s(T, c) \leq s$, otherwise $s(T, c) < s$.
\end{itemize}
This is can be done in $\poly(n,m)$ by lemma \ref{lemma:committeeMatching}.
\end{proof}

\begin{theorem}
For any fixed $k$, $\poscompk$ is NP-complete under every strongly pure positional scoring rule with polynomial scores, other than plurality and veto.
\end{theorem}
\begin{proof}
By the same reduction from Theorem \ref{thm:PolyRulePosMemk}. For every $c \in C^1$, $c$ is a PW in $\election^1$ if and only if $W \cup \set{c}$ is a possible $k$-committee in  $\election^2$.
\end{proof}

\section{Other Committee Selection Rules}

So far, we considered decision problems regrading committee selection where the committee is the $k$ highest rated candidates. There are many other committee selection rules such as Condorcet \cite{10.2307/2101455}, Chamberlin-Courant’s rule \cite{10.2307/1957270} and Monroe's rule \cite{monroe_1995}. For these rules, as opposed to the top-$k$ rule, even under complete profiles some problems such as computing the elected committee are hard. However, we show that some problems regarding Condorcet committees can be solved using techniques from our results.

\def\neccc{\mathsf{NecCC}}
\def\poscc{\mathsf{PosCC}}
\def\neccck{\neccc\angs{k}}
\def\poscck{\poscc\angs{k}}

For every voter $i \in [n]$, define a binary relation on $\committees_k \eqdef \set{W \subseteq C, |W|=k}$ by
$$ W \geq_i W' \iff \sum_{w \in W} s(T_i,w) \geq \sum_{w' \in W'} s(T_i,w')$$
$W \in \committees_k$ is a \emph{weak Condorcet $k$-committee} (CC) if
$$ \abs{ \set{i \in [n] : W \geq_i W'}} \geq \abs{ \set{i \in [n] : W' \geq_i W}} $$
for any other $W' \in \committees_k$. A \emph{strong Condorcet $k$-committee} (SCC) is defined similarly, with $>$ instead of $\geq$. We focus on CC, but all the results we discuss also hold for SCC. \cite{DBLP:journals/mss/Darmann13} showed that for varying committee size, deciding whether a set is a CC is coNP-hard under the rules Borda, $t$-approval and $t$-veto (for any fixed $t \geq 2$). However, it can be solved in polynomial time under the rules plurality and veto. Note that for any fixed $k$, CC can be solved in polynomial time under every positional scoring rule, since we can go over all sets in $\committees_k$ and check if some set defeats $W$. As before we can define the problems of necessary and possible Condorcet committees $\neccc$, $\poscc$ and the versions where $k$ is fixed $\neccck$, $\poscck$.

\begin{theorem}
For any fixed $k$, $\neccck$ can be solved in polynomial time under every positional scoring rule with polynomial scores.
\end{theorem}
\begin{proof}
Given $W \in \mathbb{C}_k$ we search for a set $W' \in \mathbb{C}_k \setminus \set{W}$ and a completion $T$ where $W'$ defeats $W$, that is,
$$ \abs{ \set{i \in [n] : W \geq_i W'} } < \abs{ \set{i \in [n] : W' \geq_i W} }$$
For every such set $W'$, initialize $N_W=0, N_{W'} = 0$. For every $i \in [n]$, use lemma \ref{lemma:dynamic} to find all vectors in $\ps_{W \cup W'}(P_i)$ and consider three cases.
\begin{itemize}
    \item If $\ps_{W \cup W'}(P_i)$ contains a vector of scores for which $\sum_{w \in W'} s_w > \sum_{w \in W} s_w$ (there exists a completion of $P_i$ where $W' >_i W$) then increase $N_{W'}$ by 1.
    \item Otherwise, if $\ps_{W \cup W'}(P_i)$ contains a vector of scores for which $\sum_{w \in W'} s_w = \sum_{w \in W} s_w$ (there exists a completion of $P_i$ where $W' =_i W$) then increase both $N_{W'}, N_W$ by 1.
    \item Otherwise, all vectors in $\ps_{W \cup W'}(P_i)$ satisfy $\sum_{w \in W'} s_w < \sum_{w \in W} s_w$ ($W' <_i W$ for all completions of $P_i$). Increase $N_W$ by 1.
\end{itemize}
Finally, there exists a completion where $W'$ defeats $W$ if and only if $N_{W'} > N_W$.
\end{proof}

For plurality and veto, a set $W \in \committees_k$ is a CC if and only if the candidates of $W$ are the $k$ highest rated candidates w.r.t at least one tie-breaking order. Using Theorems \ref{thm:NecCom}, \ref{thm:PluralityVetoPosCom} we get the following result.
\begin{corollary}
$\neccc$ and $\poscc$ can be solved in $k \cdot 
\poly(n,m)$ under the rules plurality and veto.
\end{corollary}

\end{document}